\documentclass[a4paper, 11pt]{article}
\usepackage{bbm}
\usepackage{verbatim}
\usepackage{amsfonts, amsmath, amssymb, amsthm}
\usepackage[english]{babel}
\usepackage[utf8x]{inputenc}
\usepackage[T1]{fontenc}

\usepackage{booktabs}
\usepackage{enumitem}
\usepackage{color, colortbl}
\usepackage{tabularx, makecell}

\usepackage{mathtools}
\usepackage{mathrsfs}
\usepackage{subcaption}
\usepackage[many]{tcolorbox}
\tcbuselibrary{listings,theorems}



\usepackage{tikz}
\usetikzlibrary{shapes,arrows,positioning,decorations.pathreplacing}
\usepackage[top=25mm,right=25mm,left=25mm,bottom=25mm,columnsep=20pt]{geometry}

   


   %






\providecommand{\Sgn}{\operatorname{sgn}}                    
\providecommand{\supp}{\Supp}



\providecommand{\argmin}{\operatorname*{argmin}}  


\newcommand{\Va}{{\mathbf{a}}}
\newcommand{\Vb}{{\mathbf{b}}}
\newcommand{\Vc}{{\mathbf{c}}}
\newcommand{\Vd}{{\mathbf{d}}}
\newcommand{\Ve}{{\mathbf{e}}}

\newcommand{\Vm}{{\mathbf{m}}}

\newcommand{\Vu}{{\mathbf{u}}}
\newcommand{\Vv}{{\mathbf{v}}}
\newcommand{\Vw}{{\mathbf{w}}}
\newcommand{\Vx}{{\mathbf{x}}}
\newcommand{\Vy}{{\mathbf{y}}}
\newcommand{\Vz}{{\mathbf{z}}}


\providecommand{\Bb}{{\boldsymbol{b}}}


\newcommand{\VA}{{\mathbf{A}}}
\newcommand{\VB}{{\mathbf{B}}}
\newcommand{\VC}{{\mathbf{C}}}

\newcommand{\VI}{{\mathbf{I}}}

\newcommand{\VM}{{\mathbf{M}}}

\newcommand{\VP}{{\mathbf{P}}}
\newcommand{\VQ}{{\mathbf{Q}}}

\newcommand{\VU}{{\mathbf{U}}}
\newcommand{\VV}{{\mathbf{V}}}
\newcommand{\VW}{{\mathbf{W}}}




\newcommand{\VSigma}  {\mathbf{\Sigma}}



\newcommand{\Veta}       {\boldsymbol{\upeta}} %

\newcommand{\Vxi}        {\boldsymbol{\upxi}}




\newcommand{\Amat}{{\rm A}}

\newcommand{\Mmat}{{\rm M}}

\newcommand{\Umat}{{\rm U}}
\newcommand{\Vmat}{{\rm V}}
\newcommand{\Wmat}{{\rm W}}



\newcommand{\usf}{\mathsf{u}}

\newcommand{\wsf}{\mathsf{w}}


\newcommand{\abs}{\mathsf{a}}


\providecommand{\CE}{{\cal E}}
\providecommand{\CF}{{\cal F}}

\providecommand{\CN}{{\cal N}}
\providecommand{\CO}{{\cal O}}

\providecommand{\CT}{{\cal T}}



\providecommand{\bbN}{\mathbb{N}}

\providecommand{\bbR}{\mathbb{R}}



\providecommand{\Fb}{\mathfrak{b}}


\newcommand*{\EUSP}[2]{\left<{#1},{#2}\right>} 


\providecommand*{\N}[1]{\left\|{#1}\right\|} 
\newcommand*{\SN}[1]{\left|{#1}\right|}      
\providecommand*{\abs}[1]{\left|{#1}\right|} 

\newcommand*{\LRP}[1]{\left(#1\right)}

























\theoremstyle{plain}
	\newtheorem{theorem}{\sffamily Theorem}[section]
	
	\newtheorem{lemma}[theorem]{\sffamily Lemma}

	\newtheorem{remark}[theorem]{\sffamily Remark}
	\newtheorem{definition}[theorem]{\sffamily Definition}

\allowdisplaybreaks



\allowdisplaybreaks

\usepackage[linkcolor=blue,colorlinks=true]{hyperref}

\newcommand{\BLUE}[1]{#1}

\renewcommand{\Bb}{\VB_\beta}
\renewcommand{\Amat}{\VA}
\renewcommand{\Mmat}{\VM}
\renewcommand{\Umat}{\VU}
\renewcommand{\Vmat}{\VV}
\renewcommand{\Wmat}{\VW}
\newcommand{\BbI}{\VB_{\beta,I}}
\newcommand{\uabq}{{\Vu_{\alpha,\beta}^q}}
\newcommand{\vabq}{{\Vv_{\alpha,\beta}^q}}
\renewcommand{\Vxi}{\boldsymbol{\mathrm{\xi}}}
\renewcommand{\Veta}{\boldsymbol{\mathrm{\eta}}}
\providecommand{\Id}{\mathsf{\VI{\Vd}}}
\newcommand{\umin}{{d_{\min}}}
\newcommand{\Tabq}{\mathcal{T}_{\alpha,\beta}^q}
\renewcommand{\Fb}{\mathcal{F}_\beta}
\newcommand{\indt}{\perp \!\!\! \perp }

\newcommand{\lmin}{\lambda_\text{min}}
\newcommand{\R}{\mathbb{R}}
\renewcommand{\supp}{\mathrm{supp}}
\newcommand{\sign}{\mathrm{sign}}

\newcommand{\round}[1]{\left( #1 \right)}

\newcommand{\curly}[1]{\left\{ #1 \right\}}
\renewcommand{\abs}[1]{\left| #1 \right|}

\newcommand{\inner}[1]{\left\langle #1 \right\rangle}

\providecommand*{\Nq}[1]{\|{#1}\|_{q}^q}         %
\providecommand*{\Ntwo}[1]{\|#1\|_{2}}         %
\providecommand*{\opN}[1]{\|#1\|}
\providecommand*{\prox}{\operatorname{prox}}     %

\tcbuselibrary{breakable}
\usepackage{authblk}
\usepackage[normalem]{ulem}
\title{Computational approaches to non-convex, sparsity-inducing multi-penalty regularization}
\author[1,3]{\v Zeljko Kereta\thanks{Email: \texttt{zeljko@simula.no}}}
\author[2]{Johannes Maly\thanks{Email: \texttt{johannes.maly@ku.de}}}
\author[3]{Valeriya Naumova\thanks{Email: \texttt{valeriya@simula.no}}}

\affil[1]{University College London, United Kingdom}
\affil[2]{KU Eichstaett/Ingolstadt, Germany}
\affil[3]{Simula Research Laboratory, Simula Metropolitan Center for Digital Engineering, Norway}
\date{}

\newcolumntype{C}[1]{>{\centering\arraybackslash}p{#1}}


\begin{document}
\maketitle

\begin{abstract}
  In this work we consider numerical efficiency and convergence rates for solvers of non-convex multi-penalty formulations when reconstructing sparse signals from noisy linear measurements. 
  We extend an existing approach, based on reduction to an augmented single-penalty formulation, to the non-convex setting and discuss its computational intractability in large-scale applications. 
  To circumvent this limitation, we propose an alternative single-penalty reduction based on infimal convolution that shares the benefits of the augmented approach but is computationally less dependent on the problem size. 
  We provide linear convergence rates for both approaches, and their dependence on design parameters. Numerical experiments substantiate our theoretical findings.
\end{abstract}


\section{Introduction}

In many real-life applications one is interested in recovering a structured signal from few corrupted linear measurements. 
One particular challenge lies in separating the ground-truth from pre-measurement noise since any such corruption is amplified during the measurement process, a phenomenon known as noise folding \cite{ACE11} or input noise model \cite{ASZ10}. 
It commonly appears in signal processing and compressed sensing applications, where noise is added to the signal both before and after the measurement process occurs. 
This can be modeled as
\begin{equation}\label{eqn:UM_IP}\Amat(\Vu^\dagger+\Vv) + \Vxi = \Vy,\end{equation}
where $\Vu^\dagger\in\bbR^n$ is an $s$-sparse original signal that we want to recover, $\Vv\in\bbR^n$ is the pre-measurement noise, $\Vxi \in \R^m$ is the post-measurement noise, and $\Amat\in\bbR^{m\times n}$ is the measurement matrix. \BLUE{Note that a signal $\Vu \in \mathbb{R}^n$ is called $s$-sparse if its support consists of at most $s$ elements, i.e. $|\supp(\Vu)| = |\curly{ i \colon \usf_i \neq 0 }| \le s$.}
Information theoretic bounds state that the number of measurements $m$ required for the exact support recovery of $\Vu^\dagger$ from \eqref{eqn:UM_IP} needs to scale linearly\footnote{Assume for simplicity $\Vv{\indt}\Vxi$, $\Vxi\sim\CN(0,\sigma^2\Id_n)$, and $\Vv\sim\CN(0, \sigma_v^2\Id_n)$. We now write \eqref{eqn:UM_IP} as $\Vy=\VA \Vu^\dagger + \Vw$, where $\Vw:=\VA \Vv+\Vxi$ represents the effective noise. 
The covariance matrix of $\Vw$ equals $\sigma^2\Id_m +\sigma_v^2\VA\VA^\top =: \VQ$. Assuming $\VA\VA^\top\approx \frac{n}{m}\Id_m$ (as is the case, with high probability, for $\VA$ with zero mean, $1/m$-variance sub-Gaussian entries), and $\sigma_v\approx \sigma$, we would have $\VQ = \sigma^2(1+C\frac{n}{m})\Id_m$, for $C>0$. 
Thus, the variance of the noise rises by a factor proportional to $n/m$, which when $m\ll n$ can be substantial.} with $n$, which leads to poor compression performance \cite{ASZ10}.

A number of recent studies \cite{PAF15, NP14,GN16, GKN18} try and mitigate these issues through a multi-penalty regularization framework defined as
\begin{equation}\label{eqn:mp_problem_gen} 
\min_{\Vu,\Vv\in\bbR^n} \frac{1}{2}\N{\Amat (\Vu+\Vv) -\Vy}_2^2 + \frac{\alpha}{q} \N{\Vu}_q^q + \frac{\beta}{p}\N{\Vv}_p^p,
\end{equation}
where $\alpha,\beta>0$ are regularization parameters, $0\leq q <2$, and $2\leq p < \infty$. 
In particular, to promote sparsity of the $\Vu$ component we choose $q\leq1$.
A natural  way to  minimize \eqref{eqn:mp_problem_gen} 
is via alternating minimization, starting from $\Vu^{0}, \Vv^{0} \in \R^n$ and then iterating as
\begin{align} \label{eqn:AM}
\begin{split}
    \Vu^{k+1} &\in \argmin_{\Vu\in\bbR^n} \frac{1}{2} \Ntwo{\Amat (\Vu+\Vv^k) - \Vy}^2 + \frac{\alpha}{q} \N{\Vu}_q^q, \\
    \Vv^{k+1} &\in \argmin_{\Vv\in\bbR^n} \frac{1}{2} \Ntwo{\Amat (\Vu^{k+1}+\Vv) - \Vy}^2 + \frac{\beta}{p} \N{\Vv}_p^p.
\end{split}
\end{align}
Whereas the second problem is differentiable and admits an explicit solution, the first problem requires iterative thresholding for $q\leq1$ \cite{NP14}, for each outer iteration $k\in\bbN$, and becomes non-convex if $q<1$. 
Moreover, alternating minimization does not lend itself to an easy analysis of the convergence rate.

\subsection{Contribution}

In this work we examine the multi-penalty problem \eqref{eqn:mp_problem_gen}, for the case $0<q\leq1$ and $p=2$. 
We first show that the augmented approach in \cite{GN16}, which allows to decouple the computation of $\Vu$ and $\Vv$ components of the solution, can be easily extended to $q<1$ to obtain an augmented single-penalty iterative thresholding algorithm providing solutions to \eqref{eqn:mp_problem_gen}. 
Since this includes computing the inverse of a possibly high-dimensional matrix, we suggest an alternative single-penalty iterative thresholding algorithm which is based on an infimal convolution formulation of \eqref{eqn:mp_problem_gen} and sidesteps the computational bottleneck of the augmented approach. 
We show a linear convergence rate for both approaches, in dependence of design parameters, and in numerical simulations confirm both the rate analysis and the efficiency gap. 
In particular, we argue that the benefits of faster convergence rates are sometimes offset by the computational demands, which suggests that a preferred method for solving the optimization problem can be chosen with respect to the size of $\VA$.

\subsection{Related Work}

In \cite{NP14} the authors approach \eqref{eqn:AM}, for $0 < q \le 1$ and $p= \infty$, on separable Hilbert spaces by applying iterative thresholding algorithms to each of the sub-problems, and show convergence of the sequence of iterates to stationary points of the underlying problem. \BLUE{The choice $p = \infty$ is of special interest when $\Vv$ models uniform pre-measurement noise. However, the authors also show that $p=2$ exhibits the best (empirical) performance for the reconstruction of $\Vu^\dagger$, for $\Vv$ modelling various common noise types (including uniform noise). It is for this reason that  in this paper we are concerned only with the case $p=2$.
We add though that more general noise types might be of interest in very particular cases, and this is a possible topic for future research.} 
In \cite{GN16} the authors reduce the optimization problem  \eqref{eqn:mp_problem_gen} to a single-penalty regularization through an augmented data matrix, for $q=1$ and $p=2$, and derive conditions on optimal support recovery. 
The authors provide theoretical and numerical evidence of superior performance of multi-penalty regularization over standard single-penalty  approaches for the sparse recovery of solutions to \eqref{eqn:UM_IP}. 
In \cite{GKN18} a principled, data-driven parameter selection approach is derived  for $q=1$ and $p=2$, based on the Lasso path.
Instead of through noise folding, a multi-penalty formulation of the objective function can also be seen from the perspective of the recovery of a signal that is a superposition of two components, e.g.\ a sparse and a smooth component. See \cite{DDD16} and references therein.
In spite of these and other advances, rigorous results regarding convergence rate and error analysis for \eqref{eqn:mp_problem_gen} have not been established.

Since we reduce \eqref{eqn:mp_problem_gen} to specific single-penalty problems, corresponding convergence results on classical proximal descent methods are of interest. 
In \cite{BL15} important insights on support stability and convergence of iterative thresholding algorithms on separable Hilbert spaces have been collected while \cite{ZLX14} proved linear convergence rates of the iterative thresholding algorithm, under certain conditions, if the underlying thresholding operator is not continuous, though the dependency on the parameters of the optimization scheme are not explicitly derived.
Linear convergence of a single penalty non-convex regularizer with adaptive thresholding was established in \cite{WZ15}, where the influence of the RIP of the design matrix on the convergence constant can be inferred.
A further survey of nonconvex regularizers for sparse recovery can be found in \cite{WC+18}.

Lastly, approaches representing regularizers as infimal convolution can be found in the context of machine learning and signal processing, cf.\  \cite{laude2018nonconvex,laude2019optimization}. 
Therein primal-dual schemes are examined for optimizing functionals penalized via infimal convolutions. The results, however, require piece-wise convexity which is not given in our case.

\subsection{Notation}
We restrict boldface lettering to matrices (uppercase), e.g.\ $\VA$, and vectors (lowercase), e.g.\ $\Vu$.
The $i^{\text{th}}$ entry of a vector $\Vu$ is denoted as $\usf_i$.
For $m\in\mathbb N$ we denote $[m]:=\{1,\ldots,m\}$.
For $0<q\leq\infty$ the $\ell_q$ norm of a vector $\Vu=(\usf_1,\ldots,\usf_n)^\top\in\bbR^n$ is denoted by $\N{\Vu}_q$.
The support set of  $\Vu\in\bbR^n$ is denoted as
\[ \supp(\Vu) = \{i \in [n] : \usf_i\neq 0\}\]
and the sign $\Sgn(\Vu)=(\Sgn(\usf_i))_{i=1}^n$ is defined component-wise by \[\Sgn(\usf)=
\begin{cases} 1,&\quad \textrm{if } \usf>0,\\
 0,&\quad \textrm{if } \usf=0,\\-1,&\quad \textrm{if }
\usf<0.\end{cases}
\]
\BLUE{For a matrix $\Mmat\in\bbR^{m\times n}$, we use $\N{\Mmat}$ to denote its spectral norm and $\lambda_{\min}(\Mmat)$ to denote its smallest singular value}. We denote the $n\times n$ identity matrix by $\Id_n$. 
For $I\subset[n]$, $\Mmat_I\in\bbR^{m\times\SN{I}}$ represents the submatrix of $\Mmat$ containing the columns indexed by $I$, and $\Vu_I \in \R^{\abs{I}}$ denotes the subvector of $\Vu$ containing the entries restricted to $I$. 
We denote the corresponding orthogonal projection operator onto $I$ as $\VP_I \in \R^{\abs{I}\times n}$, so that $\VP_I \Vu = \Vu_I$. When indexed by a set $T \subset \R^n$, $\VP_T$ denotes the orthogonal projection onto $T$.
\BLUE{Finally, the set-valued operator $\partial$ denotes the limiting Fr\'echet subdifferential, and $\operatorname{dom} \partial f = \curly{\Vx \colon \partial f (\Vx) \neq \emptyset}$ is its corresponding domain when applied to a function $f \colon \mathbb{R}^n \rightarrow \mathbb{R} \cup \curly{\infty}$, cf.\ \cite{rockafellar2009variational,mordukhovich2006variational}}.


\section{Main Results} 

Consider the multi-penalty problem \eqref{eqn:mp_problem_gen} for $p=2$, i.e.\ minimizing
\begin{equation}\label{eqn:mp_problem} 
 \CT_{\alpha,\beta}^q(\Vu,\Vv) := \frac{1}{2}\N{\Amat (\Vu+\Vv) -\Vy}_2^2 + \frac{\alpha}{q} \N{\Vu}_q^q + \frac{\beta}{2}\Ntwo{\Vv}^2,
\end{equation}
and denote a corresponding solution pair by
\begin{equation}
\label{eqn:mp_argmin}
    \LRP{\uabq,\vabq} \in \argmin_{\Vu,\Vv\in\bbR^n} \CT_{\alpha,\beta}^q(\Vu,\Vv). 
\end{equation}
As mentioned above $\VA\in\bbR^{m\times n}$, $\Vy\in\bbR^m$, $\alpha,\beta>0$ are regularization parameters balancing the contributions of the data-fidelity term and the two regularization terms, and $0<q\leq 1$.

Let us introduce two widely known concepts relevant for the forthcoming discussion. 
First, the \emph{Kurdyka-\L{}ojasiewicz} (K\L{}) property;  a well-established tool for analyzing the convergence, and convergence rates, of proximal descent algorithms \cite{attouch2010proximal}.

\begin{definition}
    A function $f:\bbR^n\rightarrow \bbR\cup\{\infty\}$ is said to have the {K\L{}} property at $\bar{\Vx} \in\operatorname{dom}\partial f$ if there exists $\eta\in(0,+\infty]$, a neighbourhood $\Omega$ of $\Vx$, and a continuous concave function $\varphi:[0,\eta)\to \bbR_+$ such that
    \begin{enumerate}
        \item $\varphi\in C^1\LRP{0,\eta}$, $\varphi(0)=0$ and $\varphi'(s)>0$ for all $s\in(0,\eta)$
        \item For all $\Vx\in\Omega\cap \{\Vx: f(\bar \Vx) < f(\Vx) < f(\bar \Vx) +\eta\}$ the K\L{} inequality holds
        \[ \varphi'\LRP{f(\Vx) - f(\bar \Vx)} \operatorname{dist}\LRP{0,\partial f(\Vx)} \geq 1.\]
    \end{enumerate}
\end{definition}

   The K\L{} property is used to describe the speed of convergence through the desingularizing function $\varphi$. 
   It has been  shown that semi-algebraic functions satisfy the K\L{} property with $\varphi(s)=cs^{1-\theta}$, where $c>0$ and $\theta\in[0,1)$ is called the K\L{} constant, which characterizes the convergence speed of proximal gradient descent algorithms \cite[Theorem 11]{attouch2010proximal}. 
   As observed in \cite{bolte2017error}, Corollary 3.6 in \cite{Li13} may be used to determine the K\L{} constant of piecewise convex polynomials.
   Even though $\Nq{\cdot}$ has the K\L{} property, cf.\ \cite[Example 5.4]{Attouch2013}, it does not result in piece-wise convex polynomials for $0<q<1$, and thus we cannot apply \cite[Corollary 3.6]{Li13} to infer the speed of convergence.
   We will instead adopt and adapt the ideas from \cite{BL15, ZLX14}.

The second concept relevant for this paper is the \emph{restricted isometry property (RIP)}, which allows to control eigenvalues of small submatrices of $\Amat \in \R^{m\times n}$, and to characterize measurement operators that allow stable and robust reconstruction of sparse signals from $m \ll n$ measurements.

\begin{definition}
    A matrix $\Amat \in \R^{m\times n}$ satisfies the restricted isometry property of order $s$ ($s$-RIP) with constant $\delta_s \in (0,1)$, if for all $s$-sparse $\Vu \in \R^n$
    \begin{align*}
        (1-\delta_s) \Ntwo{\Vu} \le \Ntwo{\Amat \Vu} \le (1+\delta_s) \Ntwo{\Vu}.
    \end{align*}
\end{definition}    

\begin{remark}\label{rem:RIP}
    For a detailed treatment of RIP, and measurement operators that fulfill it, we refer the reader to \cite{foucart:2013}. 
    Let us only mention that if the entries of $\Amat$ are i.i.d. copies of a Gaussian random variable with mean zero and variance $\frac{1}{m}$, then
    \begin{align*}
        m \ge C\delta_s^{-2} s \log\left( \frac{en}{s} \right)
    \end{align*}
    measurements suffice to have an $s$-RIP with constant $\delta_s> 0$ with high probability, for an absolute constant $C > 0$. Consequently, $\delta_s = \CO\big( m^{-1/2}\sqrt{s \log(en/s)} \big)$ with high probability.
\end{remark}


\subsection{Augmented Formulation} 

It was observed in  \cite{GN16} that for $q = 1$, the multi-penalty problem \eqref{eqn:mp_problem_gen} reduces to single-penalty regularization where measurement matrix and datum are adjusted by the regularization parameter $\beta$. 
We include this result, extended to $0 < q \le 1$, together with the proof (see Section \ref{app:mp_decoupled_sol}), which is analogous to \cite[Lemma 1]{GN16}.
\begin{lemma}\label{lem:mp_decoupled_sol}
The pair $(\uabq,\vabq)$ minimizes $\CT_{\alpha,\beta}^q$ in \eqref{eqn:mp_problem} if and only if 
\begin{align} \label{eqn:vOFu}
    \vabq = v(\uabq) = \LRP{\beta\,\Id_n + \Amat^\top\Amat}^{-1}\LRP{\Amat^\top \Vy-\Amat^\top\Amat \uabq},
\end{align}
and  $\uabq$ is the solution of the \emph{augmented problem}
\begin{equation}\label{eqn:q_augmented_problem} 
    \uabq \in \argmin_{\Vu\in\bbR^n} \CF_\beta(\Vu), \quad \CF_\beta(\Vu) := \frac{1}{2}\N{\Bb \Vu - \Vy_\beta}_2^2 + \frac{\alpha}{q}\N{\Vu}_q^q,
\end{equation}
with
\begin{align*}
    \Bb = \LRP{\Id_m+\frac{\Amat\Amat^\top}{\beta}}^{-1/2}\Amat \quad \text{and} \quad \Vy_\beta =\LRP{\Id_m+\frac{\Amat\Amat^\top}{\beta}}^{-1/2} \Vy.
\end{align*}{}
\end{lemma}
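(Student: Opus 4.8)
The plan is to minimize $\CT_{\alpha,\beta}^q(\Vu,\Vv)$ in two stages, exploiting that for fixed $\Vu$ the map $\Vv\mapsto \CT_{\alpha,\beta}^q(\Vu,\Vv)$ is a strictly convex quadratic. First I would fix $\Vu$ and compute the unique minimizer in $\Vv$: setting the gradient in $\Vv$ to zero gives $\Amat^\top\Amat(\Vu+\Vv) - \Amat^\top\Vy + \beta\Vv = 0$, i.e. $(\beta\Id_n + \Amat^\top\Amat)\Vv = \Amat^\top\Vy - \Amat^\top\Amat\Vu$, which is exactly \eqref{eqn:vOFu} with $\Vu = \uabq$. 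Since $\beta\Id_n + \Amat^\top\Amat$ is symmetric positive definite, this $v(\Vu)$ is well-defined and the unique inner minimizer, so $(\uabq,\vabq)$ is optimal for $\CT_{\alpha,\beta}^q$ if and only if $\vabq = v(\uabq)$ and $\uabq$ minimizes the reduced functional $\Vu\mapsto \CT_{\alpha,\beta}^q(\Vu, v(\Vu))$.

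Next I would substitute $\Vv = v(\Vu)$ back into $\CT_{\alpha,\beta}^q$ and simplify the resulting data-fidelity term. Writing $\VM := (\beta\Id_n + \Amat^\top\Amat)$ and using $\Amat(\Vu + v(\Vu)) - \Vy = \Amat\Vu + \Amat\VM^{-1}(\Amat^\top\Vy - \Amat^\top\Amat\Vu) - \Vy$, I would collect terms to show that
\[
\tfrac12\Ntwo{\Amat(\Vu + v(\Vu)) - \Vy}^2 + \tfrac{\beta}{2}\Ntwo{v(\Vu)}^2 = \tfrac12\Ntwo{\Bb\Vu - \Vy_\beta}^2 + c,
\]
where $c$ is a constant independent of $\Vu$. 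The cleanest route is to use the matrix identity $\beta(\beta\Id_n + \Amat^\top\Amat)^{-1} = \Id_n - \Amat^\top(\beta\Id_m + \Amat\Amat^\top)^{-1}\Amat$ (a push-through / Woodbury identity) to rewrite everything in terms of $(\Id_m + \beta^{-1}\Amat\Amat^\top)^{-1}$; after expanding the two squared norms and cancelling, the quadratic form in $\Vu$ reduces to $\Vu^\top\Amat^\top(\Id_m + \beta^{-1}\Amat\Amat^\top)^{-1}\Amat\Vu = \Ntwo{\Bb\Vu}^2$, the cross term reduces to $-2\,\Vu^\top\Bb^\top\Vy_\beta$, and the $\Vu$-independent remainder is the constant $c$. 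Since adding $\tfrac{\alpha}{q}\N{\Vu}_q^q$ to both sides leaves the constant $c$ untouched, minimizing $\CT_{\alpha,\beta}^q(\Vu,v(\Vu))$ over $\Vu$ is equivalent to minimizing $\CF_\beta(\Vu)$, which is the claimed augmented problem \eqref{eqn:q_augmented_problem}.

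The main obstacle is purely the bookkeeping in this algebraic reduction: one must carefully expand $\tfrac12\Ntwo{\Amat(\Vu+v(\Vu))-\Vy}^2 + \tfrac{\beta}{2}\Ntwo{v(\Vu)}^2$, substitute the Woodbury identity in the right places, and verify that the three groups of terms (quadratic in $\Vu$, linear in $\Vu$, constant) match $\tfrac12\Ntwo{\Bb\Vu - \Vy_\beta}^2 + c$. A small point worth stating explicitly is that $\Id_m + \beta^{-1}\Amat\Amat^\top$ is symmetric positive definite, so its inverse square root $(\Id_m + \beta^{-1}\Amat\Amat^\top)^{-1/2}$ exists and is itself symmetric positive definite, which justifies writing $\Bb^\top\Bb = \Amat^\top(\Id_m + \beta^{-1}\Amat\Amat^\top)^{-1}\Amat$ and $\Bb^\top\Vy_\beta = \Amat^\top(\Id_m + \beta^{-1}\Amat\Amat^\top)^{-1}\Vy$. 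Since the argument that $\Vv = v(\Vu)$ at optimality is an exact, lossless partial minimization (not a relaxation), the "if and only if" is automatic once the reduction of the objective is established. This mirrors \cite[Lemma 1]{GN16} verbatim except that the regularizer $\tfrac{\alpha}{q}\N{\Vu}_q^q$ plays no role in the computation — it is simply carried along — so the extension from $q=1$ to $0 < q \le 1$ requires no new ideas.
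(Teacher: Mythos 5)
Your proposal is correct and follows essentially the same route as the paper: eliminate $\Vv$ by exact partial minimization (Tikhonov step), then use the Woodbury/push-through identity to rewrite the reduced data-fidelity term, with the ``if and only if'' following because the inner minimization is lossless. The only cosmetic difference is that the paper shortcuts your bookkeeping via the identity $\CT_{\alpha,\beta}^q(\Vu,v(\Vu)) = \tfrac12\langle \Amat(\Vu+v(\Vu))-\Vy,\,\Amat\Vu-\Vy\rangle + \tfrac{\alpha}{q}\N{\Vu}_q^q$, which shows directly that your additive constant $c$ is in fact zero.
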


\begin{remark}
The noise folding forward model \eqref{eqn:UM_IP} is in \cite{ACE11} written in the whitened form as $\tilde\Vy = \VB \Vu^\dagger+\Veta $, for $\tilde\Vy = \VQ^{-1/2}\Vy$, $\VB=\VQ^{-1/2}\Amat$, $\Veta = \VQ^{-1/2}(\Amat\Vv+\Vxi)$, for $\VQ=\frac{1}{c}(\sigma^2\Id_m+\sigma_v^2\Amat\Amat^\top)$ and $c>0$ is a constant.    
Notice that this is particularly related to the augmented problem in \eqref{eqn:q_augmented_problem}.\\
On an unrelated note, improving on the analysis in \cite[Proposition 2]{ACE11} one can show (see Lemma \ref{lem:BbCoherence}) that the coherence, defined for a matrix $\VM$ as
\[
    \mathrm{coh}(\VM) = \max_{i\neq j} \frac{\SN{\Vm_i^\top\Vm_j}}{\Ntwo{\Vm_i}\Ntwo{\Vm_j}},
\]
where $\Vm_i$ is the $i$-th column of $\VM$, of the augmented measurement matrix $\Bb$ satisfies
\begin{align} \label{eq:CohBound}
    \mathrm{coh}(\Bb)\leq\LRP{1+\frac{\opN{\Amat}^2}{\beta}}\LRP{\mathrm{coh}(\VA)+\frac{\opN{\Amat}^2}{\beta}}.
\end{align}
\BLUE{In compressed sensing literature, the magnitude of the coherence of a matrix is an important measure of quality for measurement matrices, cf.\ \cite[Section 5]{foucart:2013}. The bound in \eqref{eq:CohBound} thus suggests that for small $\N{\Amat}$ or large $\beta$, the linear measurement process modelled by $\Bb$ is as information preserving as the one modelled by $\Amat$.}
In addition, Lemma \ref{lem:BbCoherence2} shows that $\mathrm{coh}(\Bb)$ behaves like the coherence of a conditioned version of $\Amat$ if $\beta \rightarrow 0$.
Let us mention that in practice $\mathrm{coh}(\Bb)$ behaves well for all $\beta$'s, and even moderate values of $\opN{\Amat\Amat^\top}$.
\end{remark}
By Lemma \ref{lem:mp_decoupled_sol}, to estimate the solution pair $(\uabq,\vabq)$ it is sufficient to first solve \eqref{eqn:q_augmented_problem}, and then insert the computed solution into \eqref{eqn:vOFu}.
Since the fidelity term $\frac{1}{2}\Ntwo{\Bb \Vu -\Vy_\beta}^2$ is smooth and the regularization term $\N{\Vu}_q^q$ non-convex, the common approach
is to use iterative thresholding through a forward-backward splitting algorithm \cite{BL15, attouch2010proximal}. 
For $\Fb$  and the augmented problem \eqref{eqn:q_augmented_problem}, the resulting thresholding iterations applied  are readily written as
\begin{equation} \label{eqn:algorithm}
    \begin{cases}\text{Set the initial vector } \Vu^0  \\
    \Vu^{k+1} = \prox_{\mu, \frac{\alpha}{q} \Nq{\cdot}} (\Vu^k - \mu \Bb^\top(\Bb \Vu^k - \Vy_\beta)) . \end{cases}
\end{equation}
Each iteration in \eqref{eqn:algorithm} can be viewed as a thresholded Landweber iteration; we first perform a step in the direction of the negative gradient of the data fidelity term, and then apply the proximal operator of the remaining non-convex term.

The proximal operator of a function $\Psi:\bbR^n\rightarrow\bbR^n$ is defined by
\begin{align} \label{eqn:proxOp}
    \prox_{\mu,\nu\Psi}(\Vu) =  \argmin_{\Vz\in\bbR^n} \frac{1}{2\mu} \Ntwo{\Vz-\Vu}^2 +\nu\Psi(\Vz),
\end{align}
where $\mu,\nu>0$. 
For separable mappings
\eqref{eqn:proxOp} can be applied component-wise, and we have
$
 \prox_{\mu,\nu\Nq{\cdot}}(\Vu) = 
 \LRP{\prox_{\mu,\nu|\cdot|^q}(\usf_i)}_{i=1}^n.$
In the general case, the proximal operator \eqref{eqn:proxOp} could be set-valued, since there might be multiple or even no minima.
It can be shown though that for $0<q<1$ the (one-dimensional) proximal operator of $\abs{\cdot}^q$ satisfies
\begin{align}\label{eqn:prox_lp}
    \prox_{\mu,\nu|\cdot|^q }(\usf) &= \begin{cases}\LRP{\cdot+\nu\mu q \Sgn({\cdot})\SN{\cdot}^{q-1}}^{-1} (\usf), &\text{for } \SN{\usf} > \tau_\mu\\ 0, &\text{for } \SN{\usf}\leq \tau_\mu\end{cases}, \\
    \text{where } \tau_\mu &=\frac{2-q}{2-2q}\LRP{2\nu\mu(1-q)}^{\frac{1}{2-q}}\nonumber.
\end{align}
The range of $\prox_{\mu,\nu\SN{\cdot}^q }$ is $(-\infty,-\lambda_{\mu,q}] \cup \{0\}\cup [\lambda_{\mu,q},\infty)$ where $\lambda_{\mu,q} = \LRP{2\mu\nu(1-q)}^{\frac{1}{2-q}}$, see \cite[Lemma 5.1]{BL15}, and it is discontinuous with a jump discontinuity\footnote{While the actual proximal operator of $\abs{\cdot}^q$ is set-valued and simultaneously assumes both possible values at $\abs{\usf} = \tau_\mu$, we follow common practice when restricting the operator to zero at $\abs{\usf} = \tau_\mu$ to have a single-valued function.} at $|\usf| = \tau_\mu$.
Note that the proximal operators in \eqref{eqn:prox_lp} are indeed thresholding operators, and as $q$ goes from $0$ to $1$ they interpolate between hard- and soft-thresholding operators.
Moreover, a closed form of the operator $\prox_{\mu,\nu|\cdot|^q }$ is known only in special cases, namely for $q=1/2$ and $q=2/3$ \cite{XC12}.

It follows easily that if the step-size $\mu>0$ is small enough (smaller than $\opN{\Bb}^{-2}$), the difference of iterates in \eqref{eqn:algorithm} decreases, i.e.\ $\Ntwo{\Vu^{k+1}-\Vu^k}\rightarrow 0$ as $k\rightarrow \infty$, see \cite[Proposition 2.1]{BL15}.
Note that the iterations in \eqref{eqn:algorithm} are quite different from those given by alternating minimization, where for each $k$ we need to compute $\Vu^{k+1}$ through iterative thresholding.
The following lemma makes this more precise; it shows that \eqref{eqn:algorithm} is equivalent to performing only the first step of iterative thresholding when computing $\Vu^{k+1}$ in \eqref{eqn:AM}. The proof can be found in Section \ref{app:AM}.

\begin{lemma} \label{lem:AM}
    The iterations defined in \eqref{eqn:algorithm} can be rewritten as
    \begin{align*}
        \Vu^{k+1} = \prox_{\mu, \frac{\alpha}{q} \Nq{\cdot}} (\Vu^k - \mu \Amat^\top(\Amat \Vu^k + \Amat v(\Vu^k) - \Vy)),
    \end{align*}
    which corresponds to a single proximal gradient descent step of \eqref{eqn:AM} starting at $\Vu^k$.
\end{lemma}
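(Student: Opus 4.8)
The plan is to verify that the argument of the proximal map in \eqref{eqn:algorithm} agrees, at the point $\Vu^k$, with the gradient-descent update of the first sub-problem in \eqref{eqn:AM} when the current iterate $\Vv^k$ is replaced by $v(\Vu^k)$. Write $\VS_\beta := \LRP{\Id_m + \Amat\Amat^\top/\beta}^{-1}$, so that $\Bb = \VS_\beta^{1/2}\Amat$ and $\Vy_\beta = \VS_\beta^{1/2}\Vy$; since $\VS_\beta^{1/2}$ is symmetric and $\VS_\beta^{1/2}\VS_\beta^{1/2} = \VS_\beta$, this gives $\Bb^\top\Bb = \Amat^\top\VS_\beta\Amat$ and $\Bb^\top\Vy_\beta = \Amat^\top\VS_\beta\Vy$, hence $\Bb^\top(\Bb\Vu - \Vy_\beta) = \Amat^\top\VS_\beta(\Amat\Vu - \Vy)$ for every $\Vu$. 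The lemma therefore reduces to the identity
\[ \Amat^\top \VS_\beta(\Amat\Vu - \Vy) = \Amat^\top\LRP{\Amat\Vu + \Amat v(\Vu) - \Vy}, \qquad \Vu\in\bbR^n. \]

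To establish it I would substitute the closed form $v(\Vu) = \LRP{\beta\Id_n + \Amat^\top\Amat}^{-1}\Amat^\top(\Vy - \Amat\Vu)$ from Lemma \ref{lem:mp_decoupled_sol} into the right-hand side and collect terms as $-\Amat^\top\square{\Id_m - \Amat\LRP{\beta\Id_n + \Amat^\top\Amat}^{-1}\Amat^\top}(\Vy - \Amat\Vu)$. The push-through identity $\LRP{\beta\Id_n + \Amat^\top\Amat}^{-1}\Amat^\top = \Amat^\top\LRP{\beta\Id_m + \Amat\Amat^\top}^{-1}$ then turns the bracket into $\Id_m - \Amat\Amat^\top\LRP{\beta\Id_m + \Amat\Amat^\top}^{-1} = \beta\LRP{\beta\Id_m + \Amat\Amat^\top}^{-1} = \VS_\beta$, which yields the claimed equality after left-multiplying by $\Amat^\top$. (Alternatively the same relation follows directly from the defining equation $(\beta\Id_n + \Amat^\top\Amat)v(\Vu) = \Amat^\top(\Vy - \Amat\Vu)$ of Lemma \ref{lem:mp_decoupled_sol}, avoiding the push-through lemma altogether, which may be the cleaner route.)

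With the inputs of $\prox_{\mu,\frac{\alpha}{q}\Nq{\cdot}}$ shown to coincide, the rewritten form of the iteration is immediate. For the last sentence of the lemma: freezing $\Vv = v(\Vu^k)$ in the first line of \eqref{eqn:AM} gives the objective $\Vu \mapsto \frac{1}{2}\Ntwo{\Amat(\Vu + v(\Vu^k)) - \Vy}^2 + \frac{\alpha}{q}\Nq{\Vu}$, whose smooth part has gradient $\Amat^\top(\Amat\Vu + \Amat v(\Vu^k) - \Vy)$, so one forward-backward step with step-size $\mu$ started from $\Vu^k$ is exactly $\prox_{\mu,\frac{\alpha}{q}\Nq{\cdot}}\LRP{\Vu^k - \mu\Amat^\top(\Amat\Vu^k + \Amat v(\Vu^k) - \Vy)}$, i.e.\ $\Vu^{k+1}$. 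The only delicate point in the whole argument is getting the two matrix identities right — the reduction $\Bb^\top(\Bb\Vu - \Vy_\beta) = \Amat^\top\VS_\beta(\Amat\Vu - \Vy)$ via symmetry of $\VS_\beta^{1/2}$, and the push-through identity used in the correct direction; once these are in hand the proof is a short linear-algebra computation with no inequalities or estimates involved.
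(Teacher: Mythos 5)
Your proof is correct and follows essentially the same route as the paper's: both reduce the claim to the matrix identity $\Amat^\top\LRP{\Id_m + \Amat\Amat^\top/\beta}^{-1} = \Amat^\top - \Amat^\top\Amat\LRP{\beta\Id_n + \Amat^\top\Amat}^{-1}\Amat^\top$, which the paper verifies via the Woodbury identity and you verify via the equivalent push-through identity. Your explicit check of the final sentence (that the update is one forward-backward step of \eqref{eqn:AM} with $\Vv$ frozen at $v(\Vu^k)$) is a harmless addition that the paper leaves implicit.
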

\subsubsection{Linear Convergence}

\BLUE{We now show that the iterates in \eqref{eqn:algorithm} converge at a linear rate to stationary points $\Vu^\star$ of $\CT_{\alpha,\beta}^q$, i.e. points such that $\boldsymbol{0} \in \partial \CT_{\alpha,\beta}^q(\Vu^\star)$, and characterize the convergence constant in dependence of design parameters.} 
Let us emphasize that since our analysis is tailored to $\ell_q$-regularization we derive more explicit guarantees (in terms of the involved parameters) than what would follow by directly applying the more general statements of \cite{ZLX14} to the augmented formulation \eqref{eqn:q_augmented_problem}.
The proof can be found in Section \ref{app:Rho}.
\begin{theorem} \label{thm:Rho}
    Let $\alpha, \beta > 0$ and $0<q\leq1$.
    Assume the matrix $\Amat\in\bbR^{m\times n}$ has  RIP of order $s$ with a constant $\delta_s \in (0,1)$, and let the stepsize $\mu$ satisfy $0 < \mu < \opN{\Amat}^{-2} + \beta^{-1}$.
    Moreover, assume\footnote{The sequence $\Vu^k$ converges provably to a stationary point since $\CT_{\alpha,\beta}^q$ is among other things coercive and has the KL-property, cf.\ \cite[Theorem 5.1]{Attouch2013}. The assumption thus is not about whether $\Vu_k$ converges but about the specific limit point which mainly depends on the concrete choice of initialization. \label{footnote:Convergence}}  $\Vu^\star\in\bbR^n$ is such that $|\supp(\Vu^\star)| \le s$ and the iterates \eqref{eqn:algorithm} satisfy $\Vu^k \rightarrow \Vu^\star$. 
    Define $I = \supp(\Vu^\star)$ and $\umin = \min_{i \in I} |\usf^\star_i|$. 
    Then there exists $k_0 \in \mathbb{N}$ such that for all $k \ge k_0$ we have
    \begin{align*}
        \Ntwo{\Vu^{k+1} - \Vu^\star}\leq\frac{1-\mu\left( 1 + \frac{\opN{\Amat}^2}{\beta} \right)^{-1} (1-\delta_s)^2}{ 1- \mu \alpha (1-q) \left( \frac{\umin}{2} \right)^{q-2} } \Ntwo{\Vu^{k} - \Vu^\star}.
    \end{align*}
\end{theorem}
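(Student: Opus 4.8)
The plan is to split the argument into a \emph{support-identification} phase, which forces the iterates eventually to live on $I := \supp(\Vu^\star)$ and to stay uniformly away from zero there, and a \emph{local linear contraction} phase on that coordinate subspace; this follows the overall strategy of \cite{BL15,ZLX14} but uses the explicit form \eqref{eqn:prox_lp} of the $\ell_q$-thresholding operator. Throughout set $g(t):=t+\alpha\mu\,\Sgn(t)\abs{t}^{q-1}$, so that by \eqref{eqn:prox_lp} the scalar operator $\prox_{\mu,\frac{\alpha}{q}\abs{\cdot}^q}$ equals $g^{-1}$ on $\{\abs{t}>\tau_\mu\}$ and vanishes on $\{\abs{t}\le\tau_\mu\}$, and note $g'(t)=1-\mu\alpha(1-q)\abs{t}^{q-2}$ on $(0,\infty)$ --- which is exactly the quantity in the denominator of the claimed bound, evaluated at $t=\umin/2$, and is increasing in $\abs{t}$.

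First I would carry out the support identification. By Lemma~\ref{lem:mp_decoupled_sol}, stationarity of $\Vu^\star$ for $\CT_{\alpha,\beta}^q$ is equivalent to stationarity for $\CF_\beta$, hence to the fixed-point relation $\usf^\star_i=\prox_{\mu,\frac{\alpha}{q}\abs{\cdot}^q}\LRP{(\Vu^\star-\mu\Bb^\top(\Bb\Vu^\star-\Vy_\beta))_i}$ for every $i$; in particular $g(\usf^\star_i)=(\Vu^\star-\mu\Bb^\top(\Bb\Vu^\star-\Vy_\beta))_i$ for $i\in I$, and $\abs{\usf^\star_i}\ge\lambda_{\mu,q}$ for $i\in I$, because the range $(-\infty,-\lambda_{\mu,q}]\cup\{0\}\cup[\lambda_{\mu,q},\infty)$ is closed and contains all $\Vu^k$ with $k\ge1$, hence their limit. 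Since $\Vu^k\to\Vu^\star$ and, for $0<q<1$, $\lambda_{\mu,q}>0$ so that every nonzero entry of $\Vu^k$ has modulus at least $\lambda_{\mu,q}$, the entries $\usf^k_i$ with $i\notin I$ must be \emph{exactly} zero for all large $k$, while $\abs{\usf^k_i}\ge\umin/2$ for $i\in I$ and all large $k$. Fix $k_0$ past which both hold for $\Vu^k$ and $\Vu^{k+1}$. (For $q=1$ the thresholding map is the continuous soft-thresholding, the denominator is $1$, and one argues analogously after identifying the inactive coordinates of $\Vu^\star$.)

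Next, for $k\ge k_0$ we have $\Bb\Vu^k=\BbI\Vu^k_I$, so restricting \eqref{eqn:algorithm} to $I$ and applying the componentwise map $G$ (the coordinatewise action of $g$) gives $G(\Vu^{k+1}_I)=(\Id_{\abs{I}}-\mu\BbI^\top\BbI)\Vu^k_I+\mu\BbI^\top\Vy_\beta$, and the stationarity relation above yields the same identity with $\Vu^\star$ in place of $\Vu^k$ and $\Vu^{k+1}$ (legitimate since $\abs{g(\usf^\star_i)}\ge\tau_\mu$). Subtracting, $G(\Vu^{k+1}_I)-G(\Vu^\star_I)=(\Id_{\abs{I}}-\mu\BbI^\top\BbI)(\Vu^k_I-\Vu^\star_I)$. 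On the left, since $\usf^{k+1}_i$ and $\usf^\star_i$ share a sign and have modulus $\ge\umin/2$ for $k\ge k_0$ and $g'$ is increasing, the mean value theorem gives $\abs{g(\usf^{k+1}_i)-g(\usf^\star_i)}\ge\bigl(1-\mu\alpha(1-q)(\umin/2)^{q-2}\bigr)\abs{\usf^{k+1}_i-\usf^\star_i}$ for each $i\in I$; squaring and summing produces the denominator of the bound. On the right, $\abs{I}\le s$ and the $s$-RIP of $\Amat$, together with $\opN{(\Id_m+\Amat\Amat^\top/\beta)^{-1/2}}\le1$ and $(\Id_m+\Amat\Amat^\top/\beta)^{-1}\succeq(1+\opN{\Amat}^2/\beta)^{-1}\Id_m$, confine the eigenvalues of $\BbI^\top\BbI$ to $\bigl[(1+\opN{\Amat}^2/\beta)^{-1}(1-\delta_s)^2,\ \opN{\Bb}^2\bigr]$, and the push-through identity gives $\opN{\Bb}^2=(\beta^{-1}+\opN{\Amat}^{-2})^{-1}$, which is $<\mu^{-1}$ by the stepsize hypothesis; hence $\Id_{\abs{I}}-\mu\BbI^\top\BbI$ is positive semidefinite and $\opN{\Id_{\abs{I}}-\mu\BbI^\top\BbI}\le 1-\mu(1+\opN{\Amat}^2/\beta)^{-1}(1-\delta_s)^2$, the numerator. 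Combining the three displays and using $\supp(\Vu^k),\supp(\Vu^{k+1}),\supp(\Vu^\star)\subseteq I$ (so that $I$-restricted $\ell_2$-norms coincide with the full ones) gives the theorem for all $k\ge k_0$.

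The main obstacle is the support identification step: one must rule out that coordinates off $I$ stay nonzero (which for $0<q<1$ is exactly what the jump discontinuity of $\prox_{\mu,\frac{\alpha}{q}\abs{\cdot}^q}$ at $\tau_\mu$ buys us, since it keeps nonzero iterate entries bounded away from $0$), and one must verify that $\Vu^\star$ itself obeys the coordinatewise fixed-point identity in the required single-valued sense, i.e.\ $\abs{g(\usf^\star_i)}\ge\tau_\mu$ on $I$, which again reduces to $\umin\ge\lambda_{\mu,q}$. A secondary technical point is translating the stepsize condition $\mu<\opN{\Amat}^{-2}+\beta^{-1}$ into $\mu\opN{\Bb}^2<1$ via the push-through identity, so that no absolute values are needed when bounding $\opN{\Id_{\abs{I}}-\mu\BbI^\top\BbI}$; and the $q=1$ case, where the prox is continuous, should be flagged separately since there support identification is genuinely more delicate.
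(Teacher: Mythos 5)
Your proposal is correct and follows essentially the same route as the paper's proof: support and sign stabilization via the gap $\lambda_{\mu,q}$ in the range of the thresholding operator, the fixed-point/KKT identities restricted to $I$, a mean-value-theorem bound on the $\Sgn(\cdot)\abs{\cdot}^{q-1}$ term evaluated at points of modulus at least $\umin/2$ (yielding the denominator), and the spectral bound $\lambda_{\min}(\BbI^\top\BbI)\ge(1+\opN{\Amat}^2/\beta)^{-1}(1-\delta_s)^2$ together with $\mu\opN{\Bb}^2<1$ (yielding the numerator). The only deviations are cosmetic --- you apply the mean value theorem componentwise to $g$ and sum squares where the paper pairs with $\Vu_I^{k+1}-\Vu_I^\star$ and uses Cauchy--Schwarz, and you derive support identification directly from $\Vu^k\rightarrow\Vu^\star$ rather than from $\Ntwo{\Vu^{k+1}-\Vu^k}\rightarrow 0$ --- and your explicit flagging of the $q=1$ case (where $\lambda_{\mu,1}=0$) is a point the paper glosses over.
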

\begin{remark} \label{rem:Rho}
\begin{enumerate}[label=(\roman*), itemsep=-5pt,partopsep=1ex,parsep=1ex, topsep=0pt]
    \item To have linear convergence in Theorem \ref{thm:Rho}, we have to choose $\alpha$ such that
    \begin{align} \label{eq:AlphaBound}
        0 < \alpha < \alpha^\star = \left( 1 + \frac{\opN{\Amat}^2}{\beta} \right)^{-1} \frac{(1-\delta_s)^2}{(1-q)}  \left( \frac{\umin}{2} \right)^{2-q}.
    \end{align}
This resembles basic assumptions of the main result in \cite{ZLX14}. One should thus interpret Theorem \ref{thm:Rho} as an additional refinement, better capable of predicting numerical behavior.
\item Theorem \ref{thm:Rho} suggests that the convergence constant depends on the sparsity of the signal and properties of $\Amat$. 
Namely, if the signal is sparser (and thus $\delta_s$ smaller) then the convergence constant decreases. Similarly, the constant decreases if we increase the number of measurements.
\item Assuming $\alpha = c\alpha^\star$, for $c \in (0,1)$, it is straight-forward to check that the rate in Theorem \ref{thm:Rho} becomes minimal by choosing $\mu \approx \opN{\Amat}^{-2} + \beta^{-1}$. In this case the result transforms into
    \begin{align*}
        \Ntwo{\Vu^{k+1} - \Vu^\star} \leq \frac{1 - \opN{\Amat}^{-2} (1-\delta_s)^2}{1 - c \opN{\Amat}^{-2} (1-\delta_s)^2} \Ntwo{\Vu^{k} - \Vu^\star}.
    \end{align*}
\item \BLUE{Since $\alpha$ and $\beta$ control the strength of regularization in $\CT_{\alpha,\beta}^q$, their choice depends on the expected noise level. Consequently, when setting $\alpha$ and $\beta$ one needs to make a trade-off between their regularizing effect and the desired convergence speed.}
\end{enumerate}
\end{remark}

\subsubsection{Computational Complexity}
\label{sec:AugCompComp}

Once $\Bb$ has been computed, executing \eqref{eqn:algorithm}  for a constant number of iterations costs $\CO(mn)$ operations: $\CO(mn)$ for matrix-vector products and $\CO(n)$ for evaluating the proximal operator.
But this gets dominated by the operations needed to obtain $\Bb$, which involve a matrix square root and a matrix-matrix linear system and have to be done in advance.
This turns out to be a computational bottleneck as soon as $m \ge n^{\frac{1}{\rho - 1}}$ as it requires $\CO(m^\rho)$ operations, where $\rho \in [2.37,3]$ depends on the used algorithmic method \cite{cormen2009introduction}. 
Such a computational cost can be prohibitive for high-dimensional applications.

\subsection{Infimal Convolution Formulation} 

To overcome the computational limitations observed above, we consider an alternative approach. 
Define a new program by
\begin{align} \label{eqn:mp_infconv}
    \Vw_{\alpha,\beta}^q = \argmin_{\Vw \in \R^n} \frac{1}{2} \Ntwo{\Amat \Vw - \Vy}^2 + \round{\frac{\alpha}{q} \Nq{\cdot} \Delta \frac{\beta}{2} \Ntwo{\cdot}^2} (\Vw),
\end{align}
where the infimal convolution is given by 
\begin{align} \label{eqn:InfConv}
    g(\Vw) := \round{\frac{\alpha}{q} \Nq{\cdot} \Delta \frac{\beta}{2} \Ntwo{\cdot}^2} (\Vw) = \inf_{\Vu \in \R^n} \frac{\alpha}{q} \Nq{\Vu} + \frac{\beta}{2} \Ntwo{\Vw - \Vu}^2.
\end{align}
For a detailed treatment of infimal convolution and its properties, see \cite{bauschke2011convex}. It is straight-forward to check that an equivalence between minimizing \eqref{eqn:mp_problem} and \eqref{eqn:mp_infconv} holds.
\begin{lemma}
    The pair $(\uabq,\vabq)$ minimizes $\CT_{\alpha,\beta}^q$ in \eqref{eqn:mp_problem} if and only if $\uabq + \vabq$ solves \eqref{eqn:mp_infconv} while $\uabq$ attains the infimal value of $\round{\frac{\alpha}{p} \Nq{\cdot} \Delta \frac{\beta}{2} \Ntwo{\cdot}^2} (\uabq + \vabq)$.
\end{lemma}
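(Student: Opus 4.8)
The plan is to absorb the pre-measurement component $\Vv$ into the signal component via the change of variables $\Vw=\Vu+\Vv$, and to recognise the resulting inner minimisation over $\Vu$ as precisely the infimal convolution $g$ from \eqref{eqn:InfConv}. First I would split the joint minimisation of $\CT_{\alpha,\beta}^q$ into a nested one: for fixed $\Vw$, substituting $\Vv=\Vw-\Vu$ gives
\[
    \min_{\substack{\Vu,\Vv\in\R^n\\ \Vu+\Vv=\Vw}}\CT_{\alpha,\beta}^q(\Vu,\Vv)
    = \frac{1}{2}\Ntwo{\Amat\Vw-\Vy}^2 + \inf_{\Vu\in\R^n}\left[\frac{\alpha}{q}\Nq{\Vu} + \frac{\beta}{2}\Ntwo{\Vw-\Vu}^2\right]
    = \frac{1}{2}\Ntwo{\Amat\Vw-\Vy}^2 + g(\Vw),
\]
since the data-fidelity term depends on $(\Vu,\Vv)$ only through $\Vw$ and hence factors out of the inner infimum. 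Thus $\CT_{\alpha,\beta}^q(\Vu,\Vv)=\frac12\Ntwo{\Amat(\Vu+\Vv)-\Vy}^2+g(\Vu+\Vv)$ whenever $\Vu$ attains the infimum defining $g(\Vu+\Vv)$, and the objective of \eqref{eqn:mp_infconv} is exactly $\Vw\mapsto\frac12\Ntwo{\Amat\Vw-\Vy}^2+g(\Vw)$. I would also record that, for each fixed $\Vw$, the map $\Vu\mapsto\frac{\alpha}{q}\Nq{\Vu}+\frac{\beta}{2}\Ntwo{\Vw-\Vu}^2$ is lower semicontinuous and coercive (because $q>0$), so the infimum defining $g(\Vw)$ is always attained; this makes the phrase ``$\uabq$ attains the infimal value of $g(\uabq+\vabq)$'' well posed.

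For the forward implication I would take a minimiser $(\uabq,\vabq)$ of $\CT_{\alpha,\beta}^q$ and put $\Vw^\star:=\uabq+\vabq$. Optimality of $(\uabq,\vabq)$ forces $\uabq$ to minimise $\Vu\mapsto\frac{\alpha}{q}\Nq{\Vu}+\frac{\beta}{2}\Ntwo{\Vw^\star-\Vu}^2$ --- any strictly better $\Vu$ would give the strictly better pair $(\Vu,\Vw^\star-\Vu)$ --- so $\uabq$ attains $g(\Vw^\star)$ and $\CT_{\alpha,\beta}^q(\uabq,\vabq)=\frac12\Ntwo{\Amat\Vw^\star-\Vy}^2+g(\Vw^\star)$. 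For any $\Vw\in\R^n$, picking $\Vu$ attaining $g(\Vw)$ and using the displayed identity,
\[
    \frac12\Ntwo{\Amat\Vw-\Vy}^2+g(\Vw)=\CT_{\alpha,\beta}^q(\Vu,\Vw-\Vu)\ \ge\ \CT_{\alpha,\beta}^q(\uabq,\vabq)=\frac12\Ntwo{\Amat\Vw^\star-\Vy}^2+g(\Vw^\star),
\]
so $\Vw^\star$ solves \eqref{eqn:mp_infconv}. For the converse I would simply reverse this: given a minimiser $\Vw^\star$ of \eqref{eqn:mp_infconv} and a point $\Vu^\star$ attaining $g(\Vw^\star)$, set $\Vv^\star:=\Vw^\star-\Vu^\star$; then for any pair $(\Vu,\Vv)$ with $\Vw:=\Vu+\Vv$,
\[
    \CT_{\alpha,\beta}^q(\Vu,\Vv)\ \ge\ \frac12\Ntwo{\Amat\Vw-\Vy}^2+g(\Vw)\ \ge\ \frac12\Ntwo{\Amat\Vw^\star-\Vy}^2+g(\Vw^\star)\ =\ \CT_{\alpha,\beta}^q(\Vu^\star,\Vv^\star),
\]
where the first inequality is the defining infimum of $g(\Vw)$, the second is optimality of $\Vw^\star$, and the equality uses that $\Vu^\star$ attains $g(\Vw^\star)$. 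Hence $(\Vu^\star,\Vv^\star)$ minimises $\CT_{\alpha,\beta}^q$.

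The whole argument is essentially bookkeeping around the identity $\min_{\Vu+\Vv=\Vw}\CT_{\alpha,\beta}^q(\Vu,\Vv)=\frac12\Ntwo{\Amat\Vw-\Vy}^2+g(\Vw)$ together with the fact that decomposing a minimisation into an inner and an outer minimisation is lossless; in particular no convexity of $\Nq{\cdot}$ (or of $g$) is needed anywhere. The only genuinely non-formal point, and the one I would be most careful about, is the attainment of the inner infimum defining $g(\Vw)$, so that the statement of the lemma is meaningful; I would dispatch this via the coercivity and lower-semicontinuity observation recorded in the first paragraph.
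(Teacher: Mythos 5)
Your proof is correct. The paper supplies no proof of this lemma at all (it is declared ``straight-forward to check''), and your argument --- decomposing the joint minimisation into an outer minimisation over $\Vw=\Vu+\Vv$ and an inner one over $\Vu$ whose value is exactly $g(\Vw)$, together with the coercivity and lower-semicontinuity observation guaranteeing that the inner infimum defining $g(\Vw)$ is attained (which is what makes the clause ``$\uabq$ attains the infimal value'' meaningful) --- is precisely the canonical verification the authors allude to.
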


In order to solve \eqref{eqn:mp_infconv} via iterative thresholding (i.e.\ proximal gradient descent), we need to efficiently evaluate the proximal operator of \eqref{eqn:InfConv}. 
A helpful observation is that \eqref{eqn:InfConv} can be interpreted as the Moreau-envelope of $\Nq{\cdot}$, which for a function $f$ and $t>0$ is defined as
\begin{align*}
    M_{t,f}(\Vx) = \round{ f \Delta \frac{1}{2t} \Ntwo{\cdot}^2 } (\Vx) =  f(\prox_{t,f}(\Vx)) + \frac{1}{2t} \Ntwo{\Vx - \prox_{t,f}(\Vx)}^2,
\end{align*}
where the last equality only holds if $\prox_{t,f}(\Vx) \neq \emptyset$. It has been observed in \cite[Theorem 6.63]{beck2017first} that computing the proximal operator of the Moreau envelope reduces to computing the proximal operator of the underlying function. 
Though stated only for convex functions in \cite{beck2017first}, it is straight-forward to generalize the result. 
%
\begin{lemma} \label{lem:Moreau_prox}
    Let $f \colon \R^n \rightarrow \R$ be a lower semi-continuous function with $f(0) = \min f$. Then,
    \begin{align*}
        \prox_{\mu,\lambda M_{t,f}} (\Vx) = \frac{t}{t + \mu \lambda} \Vx + \frac{\mu \lambda}{t + \mu \lambda} \prox_{(t+\mu\lambda), f}(\Vx).
    \end{align*}
\end{lemma}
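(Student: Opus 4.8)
The plan is to compute the proximal operator of $\lambda M_{t,f}$ directly from its definition as an infimal convolution and then exploit associativity of infimal convolution with quadratics. First I would write out
\[
    \prox_{\mu,\lambda M_{t,f}}(\Vx) = \argmin_{\Vz \in \R^n} \frac{1}{2\mu}\Ntwo{\Vz - \Vx}^2 + \lambda M_{t,f}(\Vz),
\]
and then substitute the definition $\lambda M_{t,f}(\Vz) = \inf_{\Vu} \lambda f(\Vu) + \frac{\lambda}{2t}\Ntwo{\Vz - \Vu}^2$. The minimization over $\Vz$ and the infimum over $\Vu$ can be interchanged (both are infima), giving a joint minimization
\[
    \min_{\Vu, \Vz} \; \frac{1}{2\mu}\Ntwo{\Vz - \Vx}^2 + \frac{\lambda}{2t}\Ntwo{\Vz - \Vu}^2 + \lambda f(\Vu).
\]

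Next I would minimize over $\Vz$ with $\Vu$ held fixed: this is an unconstrained strongly convex quadratic in $\Vz$, so the minimizer is the explicit convex combination $\Vz^\star = \frac{t}{t+\mu\lambda}\Vx + \frac{\mu\lambda}{t+\mu\lambda}\Vu$, and substituting back and simplifying the two quadratic terms collapses them (via the standard identity for the sum of two weighted squared distances evaluated at their barycenter) into a single quadratic $\frac{\lambda}{2(t+\mu\lambda)}\Ntwo{\Vu - \Vx}^2$ plus a constant independent of $\Vu$. The remaining problem is then
\[
    \min_{\Vu} \; \frac{\lambda}{2(t+\mu\lambda)}\Ntwo{\Vu - \Vx}^2 + \lambda f(\Vu),
\]
which, after dividing through by $\lambda$, is exactly $\prox_{(t+\mu\lambda),f}(\Vx)$. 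Plugging the resulting minimizer $\Vu^\star = \prox_{(t+\mu\lambda),f}(\Vx)$ into the expression for $\Vz^\star$ yields the claimed formula.

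The main point requiring care — rather than a genuine obstacle — is justifying that nothing degenerate happens when $f$ is merely lower semi-continuous and nonconvex: one needs that the infima are attained (so that "$\prox$" is nonempty and the interchange of infima is legitimate), and that the Moreau envelope $M_{t,f}$ is well-defined and real-valued. Here the hypotheses do the work: $f$ lower semi-continuous with $f(0) = \min f$ guarantees $f$ is bounded below, so each quadratically-perturbed objective is coercive and lower semi-continuous, hence attains its infimum; this makes $M_{t,f}$ finite everywhere and makes the last equality in the definition of $M_{t,f}$ meaningful. One should also note the argument never uses convexity — the completion-of-squares step is purely algebraic and the attainment arguments only use coercivity — which is precisely why the convex statement in \cite[Theorem 6.63]{beck2017first} generalizes verbatim. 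I would close by remarking that $\prox_{(t+\mu\lambda),f}(\Vx)$ may be set-valued, in which case the identity holds for each selection.
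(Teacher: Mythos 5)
Your proposal is correct and follows essentially the same route as the paper: both reduce the prox of the Moreau envelope to a joint minimization over $(\Vz,\Vu)$, eliminate $\Vz$ to find that the minimizer is the convex combination $\frac{t}{t+\mu\lambda}\Vx + \frac{\mu\lambda}{t+\mu\lambda}\Vu$, and recognize the remaining problem in $\Vu$ as $\prox_{(t+\mu\lambda),f}(\Vx)$, with the same coercivity/lower semi-continuity argument guaranteeing attainment. The only cosmetic difference is that you solve the inner quadratic minimization in $\Vz$ directly via the first-order condition and the standard two-squares identity, whereas the paper first argues geometrically that the minimizer lies on the segment joining $\Vx$ and $\tilde{\Vz}$ and then optimizes the one-dimensional parameter $\theta$; the computations are equivalent.
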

The proof is in Section \ref{app:Moreau_prox}.
Define now the proximal gradient descent for \eqref{eqn:mp_infconv} by
\begin{align}\label{eqn:InfConv_algorithm}
    \begin{cases}
    \text{Set the initial vector } \Vw^0  \\
    \Vw^{k+1} = \prox_{\mu,g} ( \Vw^k - \mu \Amat^\top (\Amat \Vw^k - \Vy)).
    \end{cases}
\end{align}
We denote  by $\Vu^k = \prox_{\frac{1}{\beta},\frac{\alpha}{q} \Nq{\cdot}} (\Vw^k)$ the sequence of minimizers attaining $g(\Vw^k)$, and set $\Vv^k = \Vw^k - \Vu^k$. Note that with this notation $\Vw^k$ and $\Vu^k$ can also be characterized via
\begin{align} \label{eqn:InfConv_AMrepresentation}
    \begin{cases}
    \Vw^k = \argmin_{\Vw \in \R^n} \frac{1}{2\mu} \Ntwo{\Vw - \Vw^{k-1} + \mu \Amat^\top (\Amat \Vw^{k-1} - \Vy)}^2 + \frac{\beta}{2} \Ntwo{\Vw - \Vu^k}^2 \\
    \Vu^k = \argmin_{\Vu \in \R^n} \frac{\beta}{2} \Ntwo{\Vu - \Vw^k}^2 +  \frac{\alpha}{q} \Nq{\Vu}
    \end{cases}.
\end{align}{}
\BLUE{Unlike \eqref{eqn:InfConv_algorithm}, the representation in \eqref{eqn:InfConv_AMrepresentation} does not yield a practically viable algorithm, since $\Vw^k$ and $\Vu^k$ are not decoupled. It does though lend itself to theoretical analysis of the iterations, cf. Section \ref{app:IC_Convergence}.}

\subsubsection{Linear Convergence}

Though $g$ in \eqref{eqn:InfConv} is continuous and separable, i.e. $g(\Vw) = \sum_{i=1}^n g_i(\wsf_i)$, it is not continuously differentiable, such that we cannot apply \cite{ZLX14} to deduce linear convergence of \eqref{eqn:InfConv_algorithm}. Nevertheless, using the KKT-conditions of the objective functions in \eqref{eqn:InfConv_AMrepresentation}, we get linear convergence of the iterates in \eqref{eqn:InfConv_algorithm} by a similar strategy as in Theorem \ref{thm:Rho}.
\begin{theorem} \label{thm:IC_Convergence}
    Let $\alpha, \beta > 0$ and $0<q\leq1$. 
    Assume\footnote{Along the lines of Footnote \ref{footnote:Convergence} in Theorem \ref{thm:Rho}. Just note that $g$ in \eqref{eqn:InfConv} has the KL-property by \cite[Theorem 3.1]{yu2019deducing} and, hence, the objective function in \eqref{eqn:mp_infconv} has it as well. \label{footnote:Convergence2}} that $0 < \mu < \opN{\Amat}^{-2}$ and $\Vw^k \rightarrow \Vw^\star$. 
    Let $I \subset [n]$ denote the support of $\Vu^\star = \prox_{\frac{1}{\beta},\frac{\alpha}{q} \Nq{\cdot}} (\Vw^\star)$ and define $\umin = \min_{i \in I} \abs{\usf_i^\star}$. 
    Then there exists $k_0 \in \mathbb{N}$ such that for all $k \ge k_0$ we have
    \begin{align*}
        \Ntwo{\Vw^{k+1} - \Vw^\star} \le \round{\frac{\opN{ \VP_I - \mu \Amat_I^\top \Amat }^2 }{ \Big({1 - \alpha\mu (1-q) \round{\frac{\umin}{2}}^{q-2} \Big)^2 }} + \frac{\opN{ \VP_{I^c} - \mu \Amat_{I^c}^\top \Amat }^2 }{(1 + \mu\beta)^2}}^{1/2} \; \Ntwo{\Vw^k - \Vw^\star}
    \end{align*}{}
\end{theorem}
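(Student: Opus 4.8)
The plan is to follow the proof strategy of Theorem~\ref{thm:Rho}, now tracking the residual $\Vw^k-\Vw^\star$ of the $\Vw$-iterates. First I would rewrite \eqref{eqn:InfConv_algorithm} in a coordinate-wise friendly form. Setting $\Vz^k := \Vw^k-\mu\Amat^\top(\Amat\Vw^k-\Vy)$ and applying Lemma~\ref{lem:Moreau_prox} to $g = M_{1/\beta,\frac{\alpha}{q}\Nq{\cdot}}$ (equivalently, eliminating $\Vw$ from the coupled characterization \eqref{eqn:InfConv_AMrepresentation}),
\[
    \Vw^{k+1}=\tfrac{1}{1+\mu\beta}\Vz^k+\tfrac{\mu\beta}{1+\mu\beta}\Vu^{k+1},\qquad \Vu^{k+1}=\prox_{1/\beta+\mu,\,\frac{\alpha}{q}\Nq{\cdot}}(\Vz^k)=\prox_{1/\beta,\,\frac{\alpha}{q}\Nq{\cdot}}(\Vw^{k+1}),
\]
so that on $\supp(\Vu^{k+1})$ the components of $\Vu^{k+1}$ lie on the increasing active branch $\psi^{-1}$ of the one-dimensional prox, with $\psi(u)=u+\alpha(\tfrac1\beta+\mu)\Sgn(u)|u|^{q-1}$. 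Since $\Vw^k\to\Vw^\star$ and $\Vw^\star$ is a stationary point of \eqref{eqn:mp_infconv} (Footnote~\ref{footnote:Convergence2}), the same relations hold at $\Vw^\star$ with $\Vz^\star := \Vw^\star-\mu\Amat^\top(\Amat\Vw^\star-\Vy)$ and $\Vu^\star=\prox_{1/\beta,\frac{\alpha}{q}\Nq{\cdot}}(\Vw^\star)$, so $\Vw^\star$ is a fixed point of \eqref{eqn:InfConv_algorithm}.

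Next I would run the support-identification step that produces $k_0$, exactly as behind Theorem~\ref{thm:Rho}: using the jump structure of the one-dimensional prox (range $(-\infty,-\lambda]\cup\{0\}\cup[\lambda,\infty)$, cf.\ \cite[Lemma~5.1]{BL15}) together with $\Vw^k\to\Vw^\star$, there is a $k_0$ such that $\supp(\Vu^k)=I:=\supp(\Vu^\star)$ for all $k\ge k_0$ and the active components $|\usf_i^k|,|\usf_i^\star|$ (hence all intermediate points entering the mean value argument below) stay arbitrarily close to $|\usf_i^\star|\ge\umin$, in particular strictly above $\umin/2$. Writing $\Vz^k-\Vz^\star=(\Id_n-\mu\Amat^\top\Amat)(\Vw^k-\Vw^\star)$ and restricting to index blocks gives $(\Vz^k-\Vz^\star)_I=(\VP_I-\mu\Amat_I^\top\Amat)(\Vw^k-\Vw^\star)$ and the analogous identity on $I^c$.

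Then I would split $\Ntwo{\Vw^{k+1}-\Vw^\star}^2 = \Ntwo{\Vw_I^{k+1}-\Vw_I^\star}^2+\Ntwo{\Vw_{I^c}^{k+1}-\Vw_{I^c}^\star}^2$ and bound each block for $k\ge k_0$. On $I^c$ we have $\usf_i^{k+1}=\usf_i^\star=0$, hence $\wsf_i^{k+1}-\wsf_i^\star=\tfrac{1}{1+\mu\beta}(\zsf_i^k-\zsf_i^\star)$, which yields the $\tfrac{\opN{\VP_{I^c}-\mu\Amat_{I^c}^\top\Amat}^2}{(1+\mu\beta)^2}$ contribution. On $I$, I would insert the mean value theorem $\usf_i^{k+1}-\usf_i^\star=\psi'(\tilde u_i)^{-1}(\zsf_i^k-\zsf_i^\star)$ (with $|\tilde u_i|$ above) into the convex-combination identity and use the algebraic identity $\psi'(u)+\mu\beta=(1+\mu\beta)\phi'(u)$, where $\phi(u)=u+\tfrac{\alpha}{\beta}\Sgn(u)|u|^{q-1}$ is the level-$1/\beta$ prox map; the two scales then collapse to $\wsf_i^{k+1}-\wsf_i^\star=\tfrac{\phi'(\tilde u_i)}{\psi'(\tilde u_i)}(\zsf_i^k-\zsf_i^\star)$, and bounding the scalar factor by $\big(1-\alpha\mu(1-q)(\umin/2)^{q-2}\big)^{-1}$ produces the $I$-term. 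Squaring, summing and taking roots gives the claim.

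The hard part will be the scalar estimate on the active block. Unlike in Theorem~\ref{thm:Rho}, the iteration uses the non-smooth Moreau-envelope prox, so one must track how the inner level $1/\beta$ and the gradient-step level $\mu$ interact (combining to $1/\beta+\mu$), and then verify that the quotient $\tfrac{\phi'(\tilde u_i)}{\psi'(\tilde u_i)}$ — which exceeds $1$ pointwise — is nevertheless dominated by $\big(1-\alpha\mu(1-q)(\umin/2)^{q-2}\big)^{-1}$. This is where the factor-of-two in $\umin/2$ is essential: for $k\ge k_0$ the points $\tilde u_i$ are close to $|\usf_i^\star|\ge\umin$, strictly above $\umin/2$, and the resulting gap between $|\tilde u_i|^{q-2}$ and $(\umin/2)^{q-2}$ (a factor of order $2^{q-2}\le\tfrac12$ since $q\le1$) supplies the slack needed for the quotient bound while also keeping $\psi'(\tilde u_i)$ bounded away from $0$, so that the active branch $\psi^{-1}$ is well defined on the relevant range. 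A secondary, routine point is making the support-identification step rigorous despite the prox's jump discontinuity, handled as in \cite{BL15}.
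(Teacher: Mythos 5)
Your proposal is correct and reproduces the paper's overall architecture: stabilize support and signs of the $\Vu^k$ (the paper's Lemma \ref{lem:InfConv_SignSupport}), split the residual into the orthogonal blocks $I$ and $I^c$, extract the factor $(1+\mu\beta)^{-1}$ off-support, run a mean-value argument on-support, and sum the squared block estimates. The one place you genuinely deviate is the active block: the paper keeps $\Vu^{k+1}$ in play, subtracts the KKT conditions of \eqref{eqn:InfConv_AMrepresentation}, and takes an inner product with $(\Vw^{k+1}-\Vw^\star)_I$ exactly as in Theorem \ref{thm:Rho}, whereas you eliminate $\Vu$ through Lemma \ref{lem:Moreau_prox} and the pointwise identity $\psi'+\mu\beta=(1+\mu\beta)\phi'$ to get the per-coordinate multiplicative form $\wsf_i^{k+1}-\wsf_i^\star=\tfrac{\phi'(\tilde u_i)}{\psi'(\tilde u_i)}(\zsf_i^k-\zsf_i^\star)$. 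Both routes reduce to the same scalar inequality $(1-q)\,\abs{\tilde u_i}^{q-2}/\phi'(\tilde u_i)\le(1-q)\round{\frac{\umin}{2}}^{q-2}$, which is indeed the crux and which the paper compresses into ``repeating the steps as in Theorem \ref{thm:Rho}''; your account of why it holds is the right one — since $\tilde u_i$ eventually lies in the range of the one-dimensional prox one has $\phi'(\tilde u_i)\ge 1-\tfrac{q}{2}$, and combined with $\abs{\tilde u_i}^{q-2}$ approaching $\abs{\usf_i^\star}^{q-2}\le 2^{q-2}\round{\frac{\umin}{2}}^{q-2}$ the loss and the slack multiply to $2^{q-1}/(2-q)\le 1$. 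Your multiplicative formulation buys a slightly more transparent derivation (no Cauchy--Schwarz step, and the origin of the denominator $1-\alpha\mu(1-q)\round{\frac{\umin}{2}}^{q-2}$ is explicit), at the cost of having to verify that the minimizer in Lemma \ref{lem:Moreau_prox} at level $\tfrac{1}{\beta}+\mu$ applied to $\Vz^k$ coincides with $\prox_{\frac{1}{\beta},\frac{\alpha}{q}\Nq{\cdot}}(\Vw^{k+1})$, which your branch computation correctly settles after support stabilization.
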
{}
The proof of Theorem \ref{thm:IC_Convergence} is given in Section \ref{app:IC_Convergence}.
\begin{remark}
On the one hand, in Theorem \ref{thm:IC_Convergence} the assumption on $\mu$ and the rate differ from Theorem \ref{thm:Rho}; there is no influence of $\beta$ on admissible step-sizes and the rate is split in two distinct components. 
On the other hand, since, for $\mu < \opN{\Amat}^{-2}$,
\begin{align} \label{eq:OperatorBound}
\begin{split}
    \opN{ \VP_I - \mu \Amat_I^\top \Amat } 
    &= \opN{ \VP_I (\Id_n - \mu \Amat^\top \Amat) }
    \le \opN{\Id_n - \mu \Amat^\top \Amat }
    < 1 \text{ and }\\
    \opN{ \VP_{I^c} - \mu \Amat_{I^c}^\top \Amat } 
    &= \opN{ \VP_{I^c} (\Id_n - \mu \Amat^\top \Amat) }
    \le \opN{\Id_n - \mu \Amat^\top \Amat }
    < 1,
\end{split}
\end{align} 
the rate in Theorem \ref{thm:IC_Convergence} suggests to choose $\beta$ large to dominate the second term of the rate in which case the assumptions on $\mu$ agree in both theorems. Moreover, this reduces the rate to 
\begin{align*}
        \Ntwo{\Vw^{k+1} - \Vw^\star} \le  \round{ \frac{\opN{ \VP_I - \mu \Amat_I^\top \Amat } }{ 1 - \alpha\mu (1-q) \round{\frac{\umin}{2}}^{q-2} } + \mathcal{O}(\beta^{-1}) } \; \Ntwo{\Vw^k - \Vw^\star},
\end{align*}{}
where the denominator is as in Theorem \ref{thm:Rho}. In light of \eqref{eq:OperatorBound}, we get linear convergence of \eqref{eqn:InfConv_algorithm} if
\begin{align*}
    0 < \alpha < \alpha^* = \frac{1 - \opN{ \VP_I - \mu \Amat_I^\top \Amat }}{\mu (1 - q) } \round{\frac{\umin}{2}}^{2-q}.
\end{align*}
\BLUE{As already discussed in Remark \ref{rem:Rho}, a trade-off between regularization and convergence rate has to be taken into account when choosing $\alpha$ and $\beta$.}
\end{remark}

\begin{remark} \label{rem:MoreauInterpretation}
    For $q = 1$, an alternative viewpoint on \eqref{eqn:InfConv_AMrepresentation} is given by
    \begin{align}\label{eqn:l1smooTH}
    \begin{split}
        \Vw^{k+1} &= \argmin_{\Vw \in \R^n} \frac{1}{2\mu} \Ntwo{\Vw - \Vw^k + \mu \Amat^\top (\Amat \Vw^k - \Vy)}^2 + \frac{\beta}{2} \Ntwo{\Vw - \Vu^{k+1}}^2\\
        &= \argmin_{\Vw \in \R^n} \frac{1}{2\mu} \Ntwo{\Vw - \Vw^k + \mu \Amat^\top (\Amat \Vw^k - \Vy)}^2 + \frac{\beta}{2} \Ntwo{\Vw - \prox_{\frac{\alpha}{\beta} \N{\cdot}_1}(\Vw)}^2\\
        &= \argmin_{\Vw \in \R^n} \frac{1}{2\mu} \Ntwo{\Vw - \Vw^k + \mu \Amat^\top (\Amat \Vw^k - \Vy)}^2 + \frac{\alpha}{2} \Ntwo{ \nabla M_{\frac{\alpha}{\beta} \N{\cdot}_1} (\Vw) }^2,
    \end{split}
    \end{align}
    where we used \cite[Eq. (3.3)]{parikh2014proximal} in the last step, meaning that
    \begin{align*}
        \Vw^{k+1} = \prox_{ \frac{\alpha \mu}{2} \Ntwo{ \nabla M_{\frac{\alpha}{\beta} \N{\cdot}_1} (\cdot) }^2} (\Vw^k - \mu \Amat^\top (\Amat \Vw^k - \Vy))
    \end{align*}
    is a proximal gradient descent sequence of $\Ntwo{ \nabla M_{\frac{\alpha}{\beta} \N{\cdot}_1} (\cdot) }^2$, the squared $\ell_2$-norm of the gradient of the smooth Moreau approximation of $\frac{\alpha}{\beta} \N{\cdot}_1$. 
    From this perspective, multi-penalty regularization resembles a Newton-type method by searching for zeros of the derivative of a smooth approximation of the $\ell_1$-norm. 
    However, transferring this intuition to the case $q<1$ is non-trivial. 
    On a technical level the equations in \eqref{eqn:l1smooTH} break down in the third line, which does not hold for $q<1$ due to non-convexity of $\Nq{\cdot}$.
\end{remark}

\subsubsection{Computational Complexity}

While \eqref{eqn:algorithm} requires computing $\Bb$, which can be costly, 
 the infimal convolution formulation \eqref{eqn:InfConv_algorithm} does not incur additional computational costs and thus directly inherits efficiency and linear convergence of the proximal descent method.
Indeed, for a fixed number of iterations the number of operations performed in \eqref{eqn:InfConv_algorithm} is $\CO(mn)$ (the additional convex combination when evaluating the proximal operator by Lemma \ref{lem:Moreau_prox} is negligible).
This is considerably lower than $\CO(m^\rho)$, for $\rho \in [2.37,3]$, which is the computational cost of the augmented formulation, particularly if $m$ is large. In numerical simulations, this effect is easy to observe, cf. Section \ref{sec:Numerics}.


\section{Numerical Experiments}
\label{sec:Numerics}

We now present experimental results that focus on two aspects of our study.
First, we examine the convergence rate of the proposed algorithms, confirming linear convergence and in case of the augmented formulation, the dependence of the convergence constant on the parameters of the problem.
Second, we examine their efficiency by studying the overall computational effort on larger scale problems.


\subsection{Convergence Rate}
\label{sec:Experiment1}

\begin{figure}[!ht]
\scriptsize
\centering
\begin{subfigure}[c]{0.48\textwidth}
    \includegraphics[width=\textwidth]{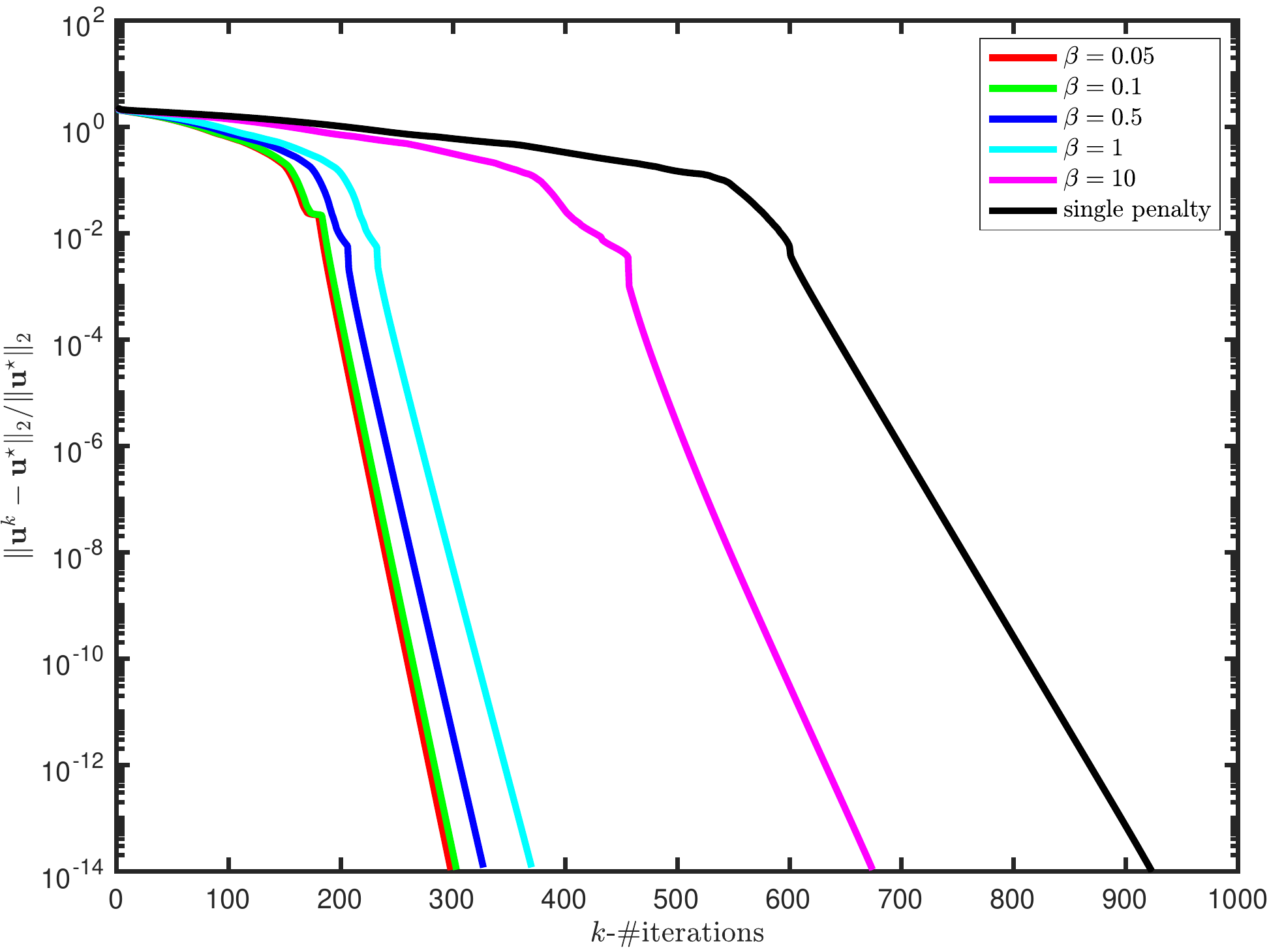}
    \caption{Varying $\beta$.}
    \label{fig:ComparisonA}
\end{subfigure} 
\quad
\begin{subfigure}[c]{0.48\textwidth}
    \includegraphics[width=\textwidth]{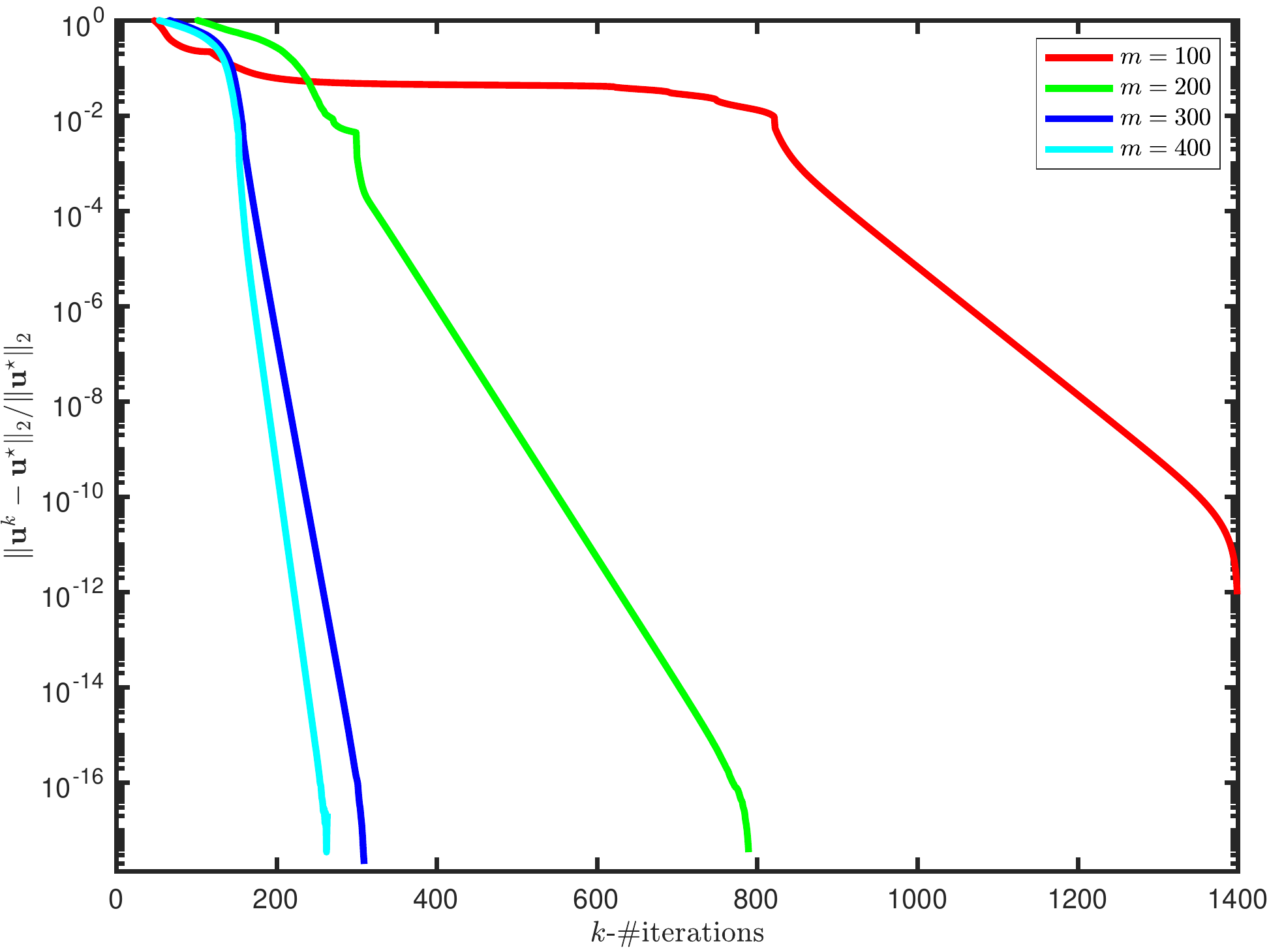}
    \caption{Varying $m$.}
    \label{fig:ComparisonB}
\end{subfigure}
    \caption{In the left panel we consider $\Amat\in\bbR^{200\times600}$ and vary the parameter $\beta$, whereas in the right panel we consider $\Amat\in\bbR^{m\times600}$ and vary the number of measurements $m\in\{100,200,300,400\}$.}
    \label{fig:Comparison}
\end{figure}
Via the RIP-constant $\delta_s$ Theorem \ref{thm:Rho} gives a direct dependence of the convergence rate on the sparsity of the solution and the properties of the matrix, whereas Theorem \ref{thm:IC_Convergence} is harder to interpret: it is straight-forward to deduce the existence of parameter regimes in which linear convergence occurs but hard to quantify the rate in terms of the parameters.
While numerical evidence for linear convergence of the infimal convolution formulation is observed in Section \ref{sec:Experiment2}, we continue by validating Theorem \ref{thm:Rho} in two experiments. In both, we take $q=1/2$, and add pre- and post-measurement Gaussian noise terms, $\Vv$ and $\Vxi$, with \BLUE{noise level $\frac{\|{\Vv}\|}{\|{\Vu^\dagger}\|}=\frac{\|{\Vxi}\|}{\|{\Vu^\dagger}\|}=0.1$}. 
We choose an admissible $\alpha$ according to Remark \ref{rem:Rho} and tune it such that the reconstructed signal shares its support size with the ground-truth. Both illustrations in Figure \ref{fig:Comparison} plot the relative error between the iterates $\Vu^k$ and the stationary point $\Vu^\star$ against the number of proximal gradient descent steps.

\paragraph{Varying the Penalty Parameter.} In the first experiment we take a Gaussian matrix $\VA\in\bbR^{200\times600}$, a $20$-sparse signal $\Vu^\dagger$, and vary $\beta$.
Theorem \ref{thm:Rho} predicts that smaller values of $\beta$ allow to take larger stepsizes, though the convergence constants are (essentially) the same.
This effect is readily observed in Figure \ref{fig:ComparisonA}.
Note that we can also observe that for smaller $\beta$ the algorithm reaches the steep part of the curve faster.
This is due to the fact that the convergence of iterates is initially slow (until the support is identified) and larger step-sizes allow to reduce the support size faster. 
The overall speed-up allowed by a smaller $\beta$ can be by up to a two-fold, in terms of the number of iterations needed to reach the  desired accuracy level.

\paragraph{Varying the Measurements.} In the second experiment we consider a Gaussian matrix $\VA\in\bbR^{m\times600}$, for $m\in\{100, 200, 300, 400\}$, and a $20$-sparse signal $\Vu^\dagger$.
Varying the number of measurements changes the RIP of the measurement matrix (a larger $m$ decreases $\delta_s$, see Remark \ref{rem:RIP}), and per Theorem \ref{thm:Rho} should affect the convergence constant. 
Figure \ref{fig:ComparisonB} shows exactly that. 
An analogous effect can be observed for different classes of measurement matrices, such as partial Toeplitz, or partial circulant matrices with Rademacher or Gaussian entries, but those results have not been included for the sake of brevity.

\subsection{Computational Comparison}
\label{sec:Experiment2}

\paragraph{Iteration Count.}
In order to provide numerical evidence for our initial statement that alternating minimization is highly sub-optimal, in Figure \ref{fig:CCComparisonA} we look at the decay of the relative error over the number of basic iterations, i.e.\ the number of thresholded gradient descent steps, of all three discussed approaches: alternating minimization \eqref{eqn:AM}, augmented formulation \eqref{eqn:algorithm}, and infimal convolution \eqref{eqn:InfConv_algorithm}. 
\BLUE{In this experiment, we use a Gaussian matrix $\Amat\in\bbR^{100\times500}$, the original signal is $14$-sparse, $q=1/2$ and the parameter $\alpha$, $\beta$, and $\mu$ are selected so that each method returns a $13$-sparse vector.} The $x$-axis refers to the number of times the proximal operator is called while the $y$-axis shows the relative error. The considerably worse performance of alternating minimization is due to the fact that it requires (too) many thresholded gradient steps to solve, for each $k\in\bbN$, sub-problems for the $\Vu^k$ component up to pre-fixed accuracy $\varepsilon = 10^{-8}$. 
Thus, the algorithm performs hardly any alternating steps. 

\paragraph{Computation Time.} 
To now illustrate the differences between augmented and infimal convolution formulation in terms of computational complexity, we perform the following experiment. We set the parameters generically to $\alpha = 0.02$, $\beta = 0.2$, and $\mu = 0.1$, and reconstruct a $100$-sparse signal $\Vu^\dagger \in \R^{5000}$ from measurements $\Vy \in \R^m$, for $m$ varying from $1000$ (sub-sampling) to $8000$ (over-sampling). We again take $q=1/2$, and add pre- and post-measurement noise terms, $\Vv$ and $\Vxi$, with noise level $0.1$. Averaging over $20$ random realizations of $\Vu^\dagger$, we record for augmented \eqref{eqn:algorithm} and infimal convolution approach \eqref{eqn:InfConv_algorithm} the time needed to perform $50$ iterations. 
After such few iterations none of the two algorithms has converged, though this already suffices to make a point regarding the computational cost since both algorithms incur the same cost (i.e.\ the gap remains the same) in the remaining iterations.
As Figure \ref{fig:CCComparisonB} shows, the additional computation of $\Bb$ in \eqref{eqn:algorithm} causes a massive additional workload leading to limited applicability of the augmented approach in large-scale settings. In contrast, the infimal convolution formulation is hardly affected by the increase in the number of measurements. Though the augmented approach tends to converge in fewer iterations, cf.\ Figure \ref{fig:CCComparisonA}, the additional iterations needed by the infimal convolution formulation to reach a comparable level of accuracy do not close the gap in computation time.
Note that we do not include alternating minimization here since it requires many more iterations (in the sense of single thresholded gradient descent steps) to show similar reconstruction performance as both proximal descents, and hence could not compete with those two algorithms.

\begin{figure}[!ht]
\scriptsize
\centering
\begin{subfigure}[c]{0.48\textwidth}
    \includegraphics[width=\textwidth]{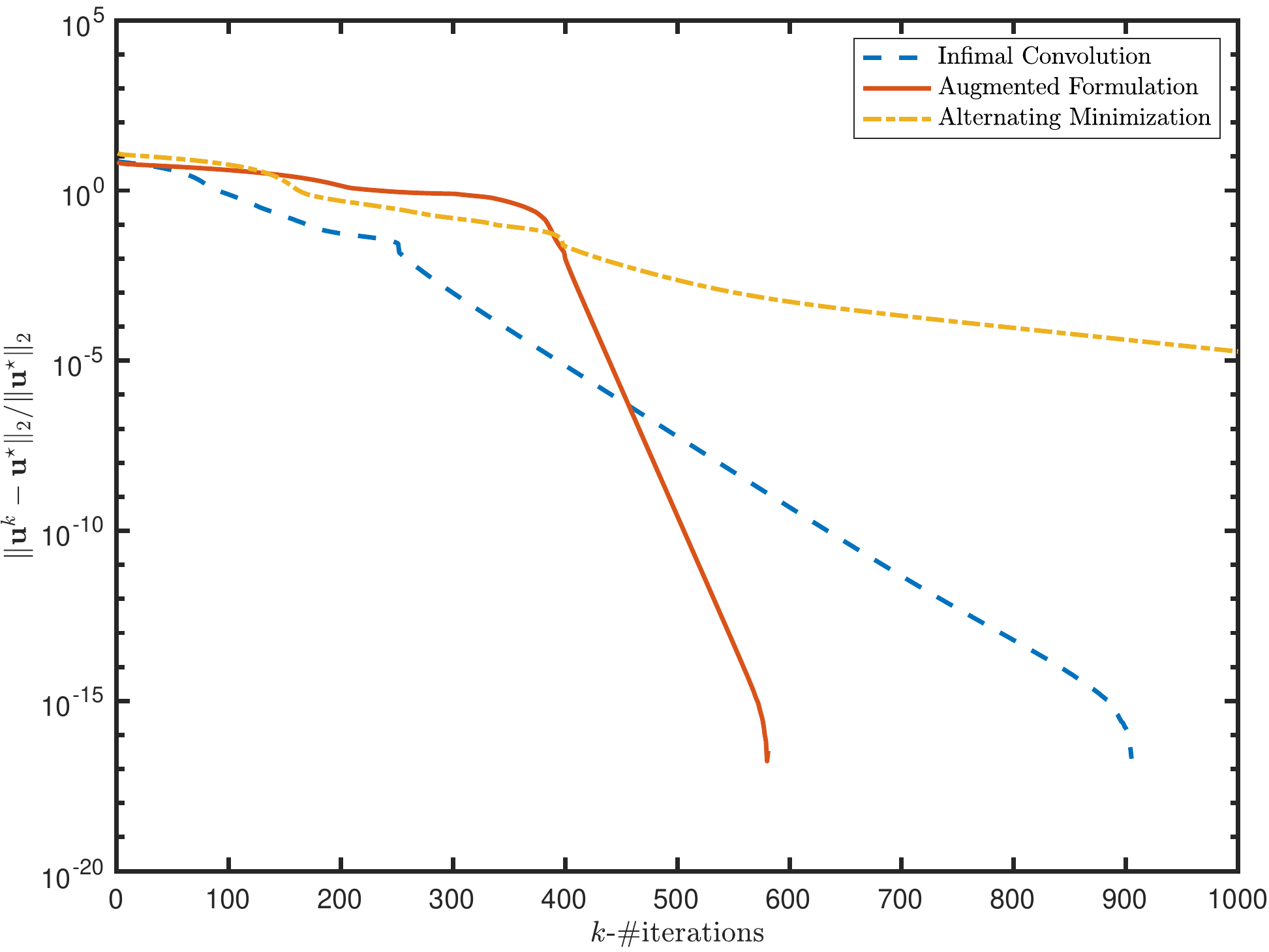}
    \caption{Convergence Rate}
    \label{fig:CCComparisonA}
\end{subfigure} 
\quad
\begin{subfigure}[c]{0.48\textwidth}
    \includegraphics[width=\textwidth]{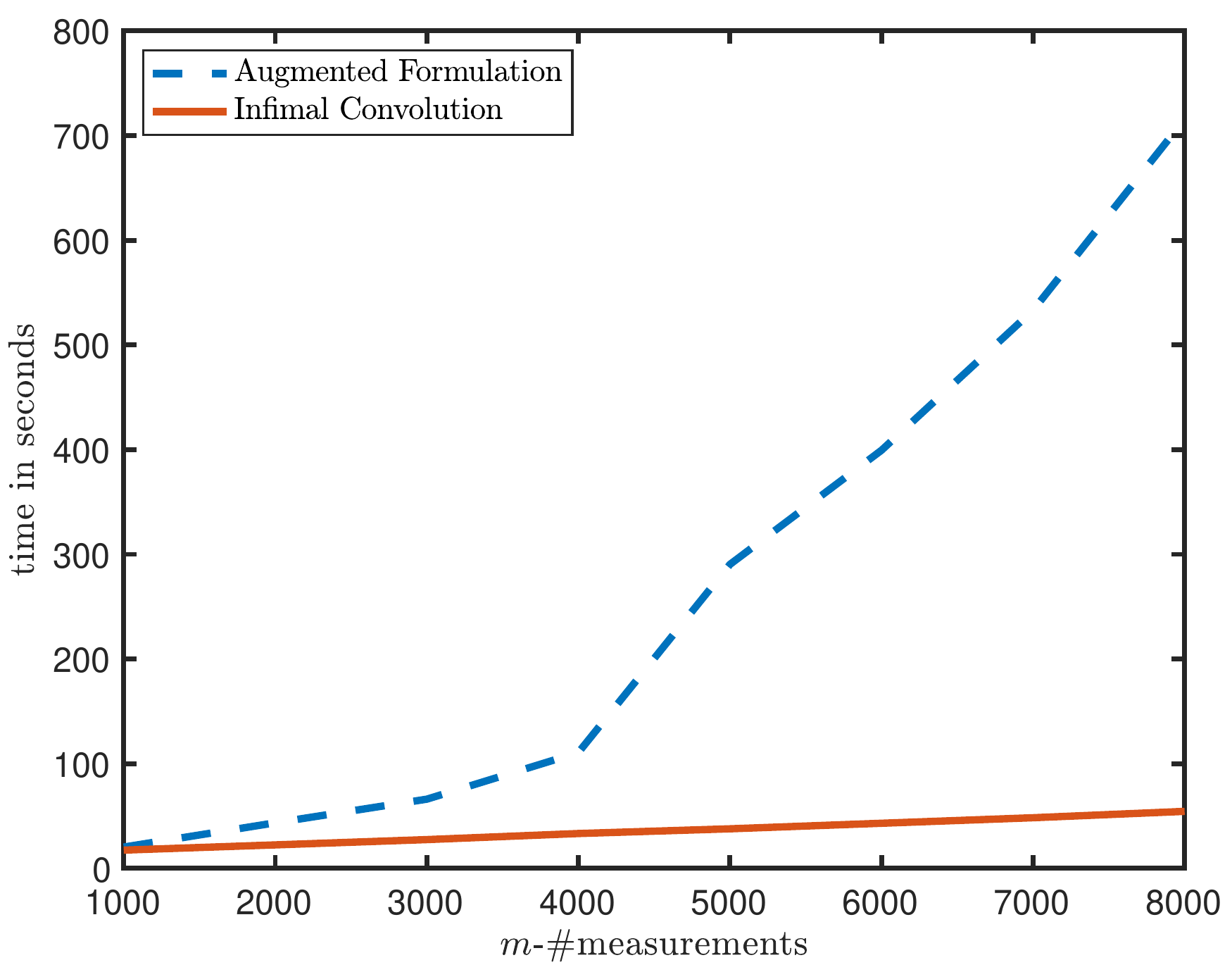}
    \caption{Running Time (50 iterations)}
    \label{fig:CCComparisonB}
\end{subfigure}
    \caption{In the left panel we look at the relative error
    with respect to the number of times the proximal operator is called for $\Amat\in\bbR^{100\times500}$ and $\Vu^\dagger \in \bbR^{500}$ is $14$-sparse. 
    In the right panel we compare average running time of augmented and infimal convolution formulations when reconstructing a $100$-sparse signal $\Vu^\dagger \in \bbR^{5000}$ from $m$ measurements, that vary from $1000$ to $8000$. 
    }
    \label{fig:CCCComparison}
\end{figure}


\section{Discussion}

In the present work we discussed the benefits of multi-penalty regularization for support recovery of signals when pre-measurement noise is amplified by the measurement operator and numerical challenges in solving the corresponding variational formulation. 
Since alternating minimization is for this task sub-optimal in terms of both the computational efficiency and theoretical analysis, we proposed a novel reduction to single-penalty regularization based on infimal convolution, and compared this new approach to an existing reduction based on augmented formulations.
Moreover, we established linear convergence for both single-penalty reductions and showed that our new approach omits a computational bottleneck that is unavoidable in the augmented approach, and causes a significant additional computational workload if the number of measurements increases.
There are several interesting open questions left for future work.

First, in Remark \ref{rem:MoreauInterpretation} we observed, for $q = 1$, a connection between the infimal convolution formulation and the proximal descent on the $\ell_2$-norm of the gradient of a Moreau-regularized $\ell_1$-functional. As we have not seen a comparable relation in the context of multi-penalty regularization so far, we are curious whether this observation can be extended to the case $0 < q < 1$. If so, this might provide valuable insights into non-convex optimization.

Second, as the reader might have noticed, great parts of the arguments we used (support stabilization, sign stabilization, etc.) are not restricted to finite dimensions. In light of more general settings of multi-penalty regularization in \cite{NP14} and single-penalty regularization in \cite{BL15}, it would be fruitful to transfer our findings to general separable Hilbert spaces as well.

Third, we mention that when using the infimal convolution based approach, in some experiments it was possible to choose $\mu$ much larger than suggested by Theorem \ref{thm:IC_Convergence}, while still observing reliable convergence of the program. 
We wonder whether there is an alternative proof leading to a relaxed condition on $\mu$ resembling the assumption in Theorem \ref{thm:Rho}.

Let us conclude by emphasizing that the infimal convolution formulation can as well be applied if regularizers other than the $\ell_q$-norm are used in the multi-penalty problem, e.g. Smoothly Clipped Absolute Deviation (SCAD) \cite{fan2001variable}, Minimax Concave Penalty (MCP) \cite{zhang2010nearly}, and Log-Sum Penalty (LSP) \cite{candes2008enhancing}. In those cases the more general single-penalty rate analysis in \cite{ZLX14} should prove useful as a tool.

\section*{Acknowledgment}

ZK and VN acknowledge the support from RCN-funded FunDaHD project No 251149/O70. JM acknowledges the support of DFG-SPP 1798.


\appendix

\section{Proofs}

\subsection{Proof of Lemma \ref{lem:mp_decoupled_sol}} \label{app:mp_decoupled_sol}
For a fixed $\Vu$ the minimization of $\CT_{\alpha,\beta}^q$ in \eqref{eqn:mp_problem} with respect to $\Vv$  reduces to Tikhonov minimization, and thus the solution satisfies
\begin{align} \label{eq:vu}
    \Vv = v(\Vu) = \LRP{\beta\,\Id_n + \Amat^\top\Amat}^{-1}\LRP{\Amat^\top \Vy-\Amat^\top\Amat \Vu}.
\end{align}
Rewriting the above expression we have
\[\beta v(\Vu) = \Amat^\top\LRP{\Vy-\Amat \Vu} - \Amat^\top\Amat v(\Vu) .\]
Plugging this expression into \eqref{eqn:mp_problem} the minimization problem for $u$ is rewritten as 
\[\CT_{\alpha,\beta}^q (\Vu,v(\Vu)) = \frac{1}{2}\EUSP{\Amat(\Vu+v(\Vu))-\Vy}{\Amat \Vu -\Vy} + \frac{\alpha}{q}\N{\Vu}_q^q.\]
The Woodbury identity for invertible matrices $\Vmat\in\bbR^{m\times m}$, $\Wmat\in\bbR^{n\times n}$ and matrices $\Mmat_1\in\bbR^{m\times n}$, $\Mmat_2\in\bbR^{n\times m}$ reads
\begin{equation} \label{eqn:Woodbury_ident}
\LRP{\Vmat\!+\!\Mmat_1\Wmat^{-1}\Mmat_2}^{-1}\!\! = \Vmat^{-1}\! - \Vmat^{-1}\Mmat_1\!\LRP{\Wmat+\Mmat_2\Vmat^{-1}\Mmat_1}^{-1}\Mmat_2\!\Vmat^{-1}.
\end{equation}
Using \eqref{eq:vu}, this gives
\begin{align*} \Amat(\Vu+v(\Vu))-\Vy &= \Amat v(\Vu) + \Amat \Vu -\Vy\\&= \LRP{\Id_m -\Amat\LRP{\beta\Id_n+\Amat^\top\Amat}^{-1}\Amat^\top}\LRP{\Amat \Vu -\Vy}\\&= \LRP{\Id_m +\frac{\Amat\Amat^\top}{\beta}}^{-1}\LRP{\Amat \Vu-\Vy}.\end{align*}
Plugging this expression back into $\CT_{\alpha,\beta}^q (\Vu,v(\Vu))$, and extracting the square root, we have $\CT_{\alpha,\beta}^q (\Vu,v(\Vu))= \CF_\beta(\Vu)$.
Minimizing over $\Vu$ and using the following simple observation gives the conclusion.

\begin{lemma} \label{lem:LocalMin}
    If $\uabq$ is a local minimizer of \eqref{eqn:q_augmented_problem}, then the pair $(\uabq,v(\uabq))$ with $v(u)$ defined in \eqref{eqn:vOFu}, is a local minimizer of $\CT_{\alpha,\beta}^q$ in \eqref{eqn:mp_problem}. 
\end{lemma}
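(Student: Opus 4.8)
The plan is to transfer the local-minimality property from the reduced functional $\CF_\beta$ back to the joint functional $\CT_{\alpha,\beta}^q$, exploiting the fact that the $\Vv$-minimization is a strictly convex (quadratic) problem that is solved \emph{globally} by the map $v(\cdot)$. The key structural identity to use is the one already established in the proof above, namely that
\[
    \CT_{\alpha,\beta}^q(\Vu, v(\Vu)) = \CF_\beta(\Vu) \qquad \text{for all } \Vu \in \bbR^n,
\]
together with the observation that, for each fixed $\Vu$, the function $\Vv \mapsto \CT_{\alpha,\beta}^q(\Vu, \Vv)$ is a Tikhonov functional whose unique minimizer is exactly $v(\Vu)$. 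Consequently $\CT_{\alpha,\beta}^q(\Vu, \Vv) \geq \CT_{\alpha,\beta}^q(\Vu, v(\Vu)) = \CF_\beta(\Vu)$ for every pair $(\Vu, \Vv)$, with equality iff $\Vv = v(\Vu)$.

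First I would fix a radius $r > 0$ such that $\uabq$ minimizes $\CF_\beta$ over the ball $B_r(\uabq)$. Then I would use continuity of the affine map $v(\cdot)$ (it is given by a fixed matrix applied to $\Vu$ plus a constant) to choose $\rho > 0$ so small that $\Vu \in B_\rho(\uabq)$ implies $v(\Vu) \in B_{r}(\uabq)$ in the first coordinate is not what matters — rather I need that any pair $(\Vu, \Vv)$ within distance $\rho$ of $(\uabq, v(\uabq))$ has its first coordinate $\Vu$ within distance $r$ of $\uabq$; this is immediate since $\Ntwo{\Vu - \uabq} \leq \Ntwo{(\Vu,\Vv) - (\uabq, v(\uabq))}$. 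Then for any such $(\Vu, \Vv)$ I would chain the inequalities
\[
    \CT_{\alpha,\beta}^q(\Vu, \Vv) \;\geq\; \CT_{\alpha,\beta}^q(\Vu, v(\Vu)) \;=\; \CF_\beta(\Vu) \;\geq\; \CF_\beta(\uabq) \;=\; \CT_{\alpha,\beta}^q(\uabq, v(\uabq)),
\]
where the first inequality is global optimality of $v(\Vu)$ in the $\Vv$-slice, the middle equality is the reduction identity, and the second inequality is the assumed local minimality of $\uabq$ (valid because $\Vu \in B_r(\uabq)$). This shows $(\uabq, v(\uabq))$ is a local minimizer of $\CT_{\alpha,\beta}^q$ and completes the argument.

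I do not anticipate a genuine obstacle here — the lemma is essentially bookkeeping built on the identity already derived. The only point requiring a little care is the quantifier management in the neighbourhood argument: one must be sure the ball around $(\uabq, v(\uabq))$ in $\bbR^n \times \bbR^n$ is mapped, via the projection onto the first coordinate, into the ball on which $\uabq$ is known to be optimal for $\CF_\beta$; this is handled by the trivial bound on the first-coordinate distance and needs no continuity of $v$ at all (continuity of $v$ would only be needed if one wanted the converse direction, recovering a local minimizer of $\CF_\beta$ from one of $\CT_{\alpha,\beta}^q$, which is not claimed). If desired, one could also note that the argument in fact yields the slightly stronger statement that the correspondence $\uabq \leftrightarrow (\uabq, v(\uabq))$ is a bijection between local minimizers, but only the stated direction is needed for Lemma \ref{lem:mp_decoupled_sol}.
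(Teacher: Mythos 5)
Your proposal is correct and rests on exactly the same inequality chain as the paper's proof, namely $\CT_{\alpha,\beta}^q(\Vu,\Vv) \ge \CT_{\alpha,\beta}^q(\Vu,v(\Vu)) = \CF_\beta(\Vu) \ge \CF_\beta(\uabq) = \CT_{\alpha,\beta}^q(\uabq,v(\uabq))$; the paper merely phrases it as a contradiction with a sequence $(\Vu^k,\Vv^k)\to(\uabq,v(\uabq))$ instead of your direct ball argument. The observation that the projection onto the first coordinate is $1$-Lipschitz (so no continuity of $v$ is needed) is the right way to handle the neighbourhood bookkeeping.
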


\begin{proof}
    Let $\uabq$ be a local minimizer of $\eqref{eqn:q_augmented_problem}$ and assume there exists a sequence $(\Vu^k,\Vv^k) \rightarrow (\uabq,v(\uabq))$ such that $\Tabq(\Vu^k,\Vv^k) < \Tabq(\uabq,v(\uabq))$, for all $k \in \mathbb{N}$. 
    We then have
    \begin{align*}
        \Fb(\Vu^k) &= \Tabq(\Vu^k,v(\Vu^k)) \le \Tabq(\Vu^k,\Vv^k)< \Tabq(\uabq,v(\uabq)) = \Fb(\uabq),
    \end{align*}
    where the first inequality follows from the minimality of $v(\Vu^k)$.
    This contradicts the assumption that $\uabq$ is a local minimizer of $\eqref{eqn:q_augmented_problem}$.
\end{proof}

\subsection{Proof of Lemma \ref{lem:AM}} \label{app:AM}

First note that
\begin{align*}
    &\prox_{\mu, \frac{\alpha}{q} \Nq{\cdot}}(\Vu^k - \mu \Bb^\top(\Bb \Vu^k - \Vy_\beta)) = \\
    &\quad\prox_{\mu, \frac{\alpha}{q} \Nq{\cdot}} \Bigg( \Vu^k - \mu\Amat^\top \LRP{ \Id_m + \frac{\Amat\Amat^\top}{\beta} }^{-1} \round{\Amat \Vu^k - \Vy} \Bigg)
\end{align*}
while
\begin{align*}
    &\prox_{\mu, \frac{\alpha}{q} \Nq{\cdot}} (\Vu^k - \mu \Amat^\top(\Amat \Vu^k + \Amat v(\Vu^k) - \Vy)) = \\
    &\quad\prox_{\mu, \frac{\alpha}{q} \Nq{\cdot}} \Bigg( \Vu^k - \mu \round{\Amat^\top - \Amat^\top\Amat (\beta \Id_n + \Amat^\top\Amat)^{-1} \Amat^\top} \round{\Amat \Vu^k - \Vy} \Bigg).
\end{align*}
Hence, it suffices to show that
\begin{align*}
    \Amat^\top \left( \Id_m + \frac{\Amat\Amat^\top}{\beta} \right)^{-1} = \Amat^\top - \Amat^\top\Amat (\beta \Id_n + \Amat^\top\Amat)^{-1} \Amat^\top .
\end{align*}
Extracting $\Amat^\top$ from the left and using the Woodbury identity \eqref{eqn:Woodbury_ident} with $\Mmat_1=\Amat, \Mmat_2 = \Amat^\top$, $\Wmat=\beta\Id_n$, and $\Vmat=\Id_m$ the conclusion follows.

\subsection{Proof of Theorem \ref{thm:Rho}}
\label{app:Rho}

In order to prove Theorem \ref{thm:Rho}, we have to control the eigenvalues of $\Bb^\top\Bb$ characterizing the growth of the data fidelity term in \eqref{eqn:q_augmented_problem}.
\begin{lemma} \label{lem:Bb_bounds}
    For $\Bb \in \R^{m\times n}$ defined as in Lemma \ref{lem:mp_decoupled_sol},
    \begin{align*}
        L := \opN{\Bb^\top \Bb} = \LRP{\opN{\Amat}^{-2} + \beta^{-1}}^{-1},
    \end{align*}
    is the Lipschitz-constant of the gradient of the augmented data-fidelity term $\frac{1}{2}\Ntwo{\Bb \Vu - \Vy_\beta}^2$. Moreover, for any $I \subset [n]$,
    \begin{align*}
        \lmin(\BbI^\top\BbI) \ge \left( 1 + \frac{\opN{\Amat}^2}{\beta} \right)^{-1} \lmin(\Amat_I^\top \Amat_I).
    \end{align*}
\end{lemma}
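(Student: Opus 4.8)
The statement to prove is Lemma~\ref{lem:Bb_bounds}, which records two facts about $\Bb = \LRP{\Id_m + \Amat\Amat^\top/\beta}^{-1/2}\Amat$. The first is an exact identity for $\opN{\Bb^\top\Bb}$, and the second is a lower bound on $\lmin(\BbI^\top\BbI)$ in terms of $\lmin(\Amat_I^\top\Amat_I)$. The unifying tool in both parts is the observation that $\Bb^\top\Bb = \Amat^\top\LRP{\Id_m+\Amat\Amat^\top/\beta}^{-1}\Amat$, so that I can work with the symmetric PSD matrix $\Amat\Amat^\top$ and its spectral decomposition.

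\textbf{Part 1 (the exact identity).} Diagonalize $\Amat\Amat^\top = \Umat\Sigmabf\Umat^\top$, where $\Sigmabf$ collects the squared singular values $\sigma_i^2$ of $\Amat$. Then $\Bb^\top\Bb$ and $\Amat^\top\Amat$ have the same eigenvectors (on the range of $\Amat^\top$), and an eigenvalue $\sigma_i^2$ of $\Amat^\top\Amat$ is mapped to $\sigma_i^2/(1+\sigma_i^2/\beta)$ for $\Bb^\top\Bb$. This follows by noting that for a right singular vector $\Vv_i$ of $\Amat$ with $\Amat\Vv_i = \sigma_i \Vu_i$, one has $\Bb^\top\Bb\,\Vv_i = \Amat^\top\LRP{\Id_m+\Amat\Amat^\top/\beta}^{-1}\Amat\Vv_i = \sigma_i\Amat^\top\LRP{\Id_m+\Amat\Amat^\top/\beta}^{-1}\Vu_i = \frac{\sigma_i^2}{1+\sigma_i^2/\beta}\Vv_i$. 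The map $t \mapsto t/(1+t/\beta)$ is increasing on $[0,\infty)$, so the largest eigenvalue of $\Bb^\top\Bb$ is obtained at $t = \opN{\Amat}^2$, giving $\opN{\Bb^\top\Bb} = \frac{\opN{\Amat}^2}{1+\opN{\Amat}^2/\beta} = \LRP{\opN{\Amat}^{-2}+\beta^{-1}}^{-1}$. That this is the Lipschitz constant of the gradient of $\Vu\mapsto\frac12\Ntwo{\Bb\Vu-\Vy_\beta}^2$ is immediate since the gradient is $\Vu\mapsto\Bb^\top(\Bb\Vu-\Vy_\beta)$, which is affine with linear part $\Bb^\top\Bb$.

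\textbf{Part 2 (the submatrix bound).} For $I\subset[n]$ and any $\Vx\in\R^{\abs{I}}$, write $\Vu = \VP_I^\top\Vx\in\R^n$ (the zero-padded extension), so $\Amat_I\Vx = \Amat\Vu$ and $\BbI\Vx = \Bb\Vu$. Then $\Ntwo{\BbI\Vx}^2 = \Ntwo{\Bb\Vu}^2 = \Vu^\top\Amat^\top\LRP{\Id_m+\Amat\Amat^\top/\beta}^{-1}\Amat\Vu = (\Amat\Vu)^\top\LRP{\Id_m+\Amat\Amat^\top/\beta}^{-1}(\Amat\Vu)$. Since $\opN{\Id_m+\Amat\Amat^\top/\beta} \le 1 + \opN{\Amat}^2/\beta$, the inverse satisfies $\LRP{\Id_m+\Amat\Amat^\top/\beta}^{-1} \succeq \LRP{1+\opN{\Amat}^2/\beta}^{-1}\Id_m$, hence $\Ntwo{\BbI\Vx}^2 \ge \LRP{1+\opN{\Amat}^2/\beta}^{-1}\Ntwo{\Amat\Vu}^2 = \LRP{1+\opN{\Amat}^2/\beta}^{-1}\Ntwo{\Amat_I\Vx}^2$. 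Taking the infimum over unit vectors $\Vx$ yields $\lmin(\BbI^\top\BbI) \ge \LRP{1+\opN{\Amat}^2/\beta}^{-1}\lmin(\Amat_I^\top\Amat_I)$, as claimed.

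\textbf{Main obstacle.} There is no serious obstacle: both parts are linear-algebra manipulations built on the identity $\Bb^\top\Bb = \Amat^\top\LRP{\Id_m+\Amat\Amat^\top/\beta}^{-1}\Amat$ and monotonicity of scalar functions / the Loewner order. The only point requiring a little care is bookkeeping between the ambient space $\R^n$ and the restricted space $\R^{\abs{I}}$ in Part~2 (making sure the zero-padding identity $\Amat_I\Vx = \Amat\VP_I^\top\Vx$ is used consistently), and remembering in Part~1 to restrict attention to the range of $\Amat^\top$ when quoting eigenvalues, since $\Bb^\top\Bb$ and $\Amat^\top\Amat$ share the kernel of $\Amat$ and there the claimed spectral correspondence is trivially $0\mapsto 0$.
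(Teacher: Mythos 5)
Your proof is correct and follows essentially the same route as the paper: both parts rest on the identity $\Bb^\top\Bb = \Amat^\top\LRP{\Id_m+\Amat\Amat^\top/\beta}^{-1}\Amat$, with the spectral norm read off from the SVD and the submatrix bound obtained by lower-bounding the quadratic form via $\LRP{\Id_m+\Amat\Amat^\top/\beta}^{-1}\succeq\LRP{1+\opN{\Amat}^2/\beta}^{-1}\Id_m$, exactly as in the paper's Equation \eqref{eqn:Bb_Lmin}. Your explicit zero-padding bookkeeping for the restriction to $I$ is a slightly more careful rendering of the paper's "for any $\Vz\in\bbR^n$" step, but it is the same argument.
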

\begin{proof}
    Let $\Amat = \Umat\VSigma \Vmat^\top$ denote the SVD of $\Amat$.
    This gives
    \begin{align}\label{eqn:BbTBb_SVS}
            \Bb^\top\Bb = \Vmat\VSigma^\top \left( \Id_m + \frac{\VSigma \VSigma^\top}{\beta} \right)^{-1} \VSigma\Vmat^\top,
    \end{align}
    so that $
        \opN{\Bb^\top\Bb}= \LRP{\opN{\Amat}^{-2} + \beta^{-1}}^{-1}.$
    By \eqref{eqn:BbTBb_SVS}, we have for any $\Vz \in \bbR^n$
    \begin{align}\label{eqn:Bb_Lmin}
        \SN{\Vz^\top \Bb^\top\Bb \Vz} 
        &= \SN{\Vz^\top \Vmat\VSigma^\top \left( \Id_m + \frac{\VSigma \VSigma^\top}{\beta} \right)^{-1} \VSigma \Vmat^\top \Vz} \nonumber
        \ge \left( 1 + \frac{\opN{\Amat}^2}{\beta} \right)^{-1} \SN{\Vz^\top \Vmat \VSigma^\top \VSigma \Vmat^\top \Vz}\nonumber \\
        &= \LRP{1 + \frac{\opN{\Amat}^2}{\beta} }^{-1} \SN{\Vz^\top \Amat^\top \Amat \Vz},
    \end{align}
    implying the second claim.
\end{proof}
We can now show that all, up to finitely many, iterates $\LRP{\Vu^k}_{k=1}^\infty$ generated by  \eqref{eqn:algorithm} share the same support and sign pattern. The proof is standard and follows \cite{BL15}. 
\begin{lemma}[Support and sign recovery] \label{lem:SupportRecovery}
    Assume $\beta > 0$, $0 < q \le 1$, and $\mu < \opN{\Amat}^{-2} + \beta^{-1}$. Then the iterates $\LRP{\Vu^k}_{k=1}^\infty$ satisfy $\Ntwo{\Vu^{k+1}-\Vu^k}\rightarrow 0$ as $k\rightarrow\infty$.
    Moreover, all iterates, up to finitely many, have the same support and sign pattern.
\end{lemma}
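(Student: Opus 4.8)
The plan is to mimic the standard support-stabilization argument for thresholded forward--backward iterations (as in \cite{BL15}), now with the augmented fidelity term $\frac{1}{2}\Ntwo{\Bb\Vu-\Vy_\beta}^2$ in place of the usual one. First I would establish the sufficient-decrease estimate. By Lemma~\ref{lem:Bb_bounds} the gradient $\Vu\mapsto\Bb^\top(\Bb\Vu-\Vy_\beta)$ is Lipschitz with constant $L = \opN{\Bb^\top\Bb} = \LRP{\opN{\Amat}^{-2}+\beta^{-1}}^{-1}$, so the step-size hypothesis is precisely $\mu < 1/L$. Since $\Vu^{k+1}$ is a global minimizer of the coercive, lower semicontinuous subproblem $\Vz\mapsto\frac{1}{2\mu}\Ntwo{\Vz-\Vu^k+\mu\Bb^\top(\Bb\Vu^k-\Vy_\beta)}^2+\frac{\alpha}{q}\Nq{\Vz}$ defining \eqref{eqn:algorithm}, comparing its value at $\Vz=\Vu^{k+1}$ with the value at $\Vz=\Vu^k$ and adding the descent lemma for the smooth part gives
\[
  \Fb(\Vu^{k+1}) \le \Fb(\Vu^k) - \LRP{\tfrac{1}{2\mu}-\tfrac{L}{2}}\Ntwo{\Vu^{k+1}-\Vu^k}^2 ,
\]
with $\frac{1}{2\mu}-\frac{L}{2}>0$. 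As $\Fb\ge0$, telescoping yields $\sum_{k\ge0}\Ntwo{\Vu^{k+1}-\Vu^k}^2 \le \Fb(\Vu^0)/(\tfrac{1}{2\mu}-\tfrac{L}{2}) < \infty$, hence $\Ntwo{\Vu^{k+1}-\Vu^k}\to0$; this is \cite[Proposition~2.1]{BL15}.

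Next I would exploit the jump in the range of the thresholding operator. Writing $\Vg^k := \Vu^k-\mu\Bb^\top(\Bb\Vu^k-\Vy_\beta)$, the iteration acts coordinatewise as $\usf_i^{k+1} = \prox_{\mu,\frac{\alpha}{q}|\cdot|^q}(\gsf_i^k)$, and for $0<q<1$ the range of $\prox_{\mu,\frac{\alpha}{q}|\cdot|^q}$ is $(-\infty,-\lambda_{\mu,q}]\cup\{0\}\cup[\lambda_{\mu,q},\infty)$ with $\lambda_{\mu,q} = \LRP{2\mu\frac{\alpha}{q}(1-q)}^{\frac{1}{2-q}}>0$. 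Hence each coordinate of each iterate is either $0$ or has modulus $\ge\lambda_{\mu,q}$. Choosing $k_1$ so that $\Ntwo{\Vu^{k+1}-\Vu^k}<\lambda_{\mu,q}$ for all $k\ge k_1$, no coordinate can pass from $\usf_i^k=0$ to $\usf_i^{k+1}\neq0$ (that would force $\Ntwo{\Vu^{k+1}-\Vu^k}\ge|\usf_i^{k+1}|\ge\lambda_{\mu,q}$), so $\supp(\Vu^{k+1})\subseteq\supp(\Vu^k)$ for $k\ge k_1$; a non-increasing chain of subsets of $[n]$ stabilizes, giving $k_2\ge k_1$ and $I\subseteq[n]$ with $\supp(\Vu^k)=I$ for all $k\ge k_2$. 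The sign pattern on $I$ is handled the same way: for $k\ge k_2$ and $i\in I$ one has $|\usf_i^k|\ge\lambda_{\mu,q}$, so a flip $\Sgn(\usf_i^{k+1})=-\Sgn(\usf_i^k)$ would force $\Ntwo{\Vu^{k+1}-\Vu^k}\ge2\lambda_{\mu,q}$, excluded for $k$ large; thus the sign pattern is eventually constant, which proves the lemma for $0<q<1$.

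For the endpoint $q=1$ the gap degenerates, $\lambda_{\mu,1}=0$, and $\prox_{\mu,\alpha|\cdot|}$ is soft-thresholding, so the argument above breaks down. Here I would instead use convergence of $\Vu^k$ (Footnote~\ref{footnote:Convergence}), hence of $\Vg^k\to\Vg^\star$, together with the soft-threshold fixed-point relation $\usf_i^\star = \prox_{\mu,\alpha|\cdot|}(\gsf_i^\star)$: under the non-degeneracy $|\gsf_i^\star|\neq\mu\alpha$ at the limit, every coordinate with $|\gsf_i^\star|<\mu\alpha$ has $\usf_i^{k+1}$ eventually zero and every coordinate with $|\gsf_i^\star|>\mu\alpha$ has $\usf_i^{k+1}$ eventually nonzero with a fixed sign, while the borderline coordinates are excluded by the choice of limit point, as in \cite{BL15}.

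I expect the only genuinely delicate point to be this $q=1$ case, where the clean ``gap'' mechanism is unavailable; everything else is routine once one has checked that the prox subproblem in \eqref{eqn:algorithm} attains its infimum, so that $\Vu^{k+1}$ is a bona fide global minimizer and the comparison inequality in the first step is legitimate despite the non-convexity of $\Nq{\cdot}$ for $q<1$. That well-posedness, and the single-valued convention at the jump in \eqref{eqn:prox_lp}, both follow from coercivity and lower semicontinuity of $\Vz\mapsto\frac{1}{2\mu}\Ntwo{\Vz-\Vx}^2+\frac{\alpha}{q}\Nq{\Vz}$.
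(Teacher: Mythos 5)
Your proof is correct and, for $0<q<1$, is essentially the paper's own argument: the step-size condition $\mu < \opN{\Amat}^{-2}+\beta^{-1}=\opN{\Bb^\top\Bb}^{-1}$ gives sufficient decrease of $\Fb$ and hence $\Ntwo{\Vu^{k+1}-\Vu^k}\to 0$ (the paper simply cites \cite[Corollary 2.1]{BL15} where you re-derive the telescoping bound), and the positive jump $\lambda_{\mu,q}$ in the range of the thresholding operator then forces support and sign to stabilize after finitely many iterations, since any change would cost at least $\lambda_{\mu,q}$, respectively $2\lambda_{\mu,q}$, in $\Ntwo{\Vu^{k+1}-\Vu^k}$. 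The one place you genuinely depart from the paper is the endpoint $q=1$: the paper runs the gap argument uniformly over $0<q\le 1$, but, as you note, $\lambda_{\mu,q}=\LRP{2\mu\nu(1-q)}^{1/(2-q)}$ vanishes at $q=1$ and soft-thresholding has range all of $\R$, so the lower bound on $\Ntwo{\Vu^{k+1}-\Vu^k}$ upon a support change becomes vacuous and the paper's proof as written does not actually cover that case. Your substitute argument --- convergence of $\Vu^k$, hence of the gradient points $\Vg^k$, combined with strict complementarity $\abs{\gsf_i^\star}\neq\mu\alpha$ at the limit --- is the standard route to finite support identification for ISTA, but be aware that it imports a non-degeneracy hypothesis that is not part of the lemma's statement; so at $q=1$ your proof (and, implicitly, the lemma) holds only under that extra condition or by invoking the corresponding identification results of \cite{BL15} directly. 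Your remaining care about well-posedness of the non-convex prox subproblem (coercivity plus lower semicontinuity, and the single-valued convention at the jump) matches what the paper leaves implicit.
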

\begin{proof}
   Since $\mu< \opN{\Amat}^{-2} + \beta^{-1}= \frac{1}{L}$ we have $\Ntwo{\Vu^{k+1}-\Vu^k}\rightarrow 0$ as $k\rightarrow\infty$ by \cite[Corollary 2.1]{BL15}.
   Now, since the  range of $\prox_{\mu,\lambda\psi }$ is 
   $(-\infty,-\lambda_{\mu,q}] \cup \{0\}\cup [\lambda_{\mu,q},\infty)$, 
   it follows that the the absolute value of a non-zero entry of $\Vu^k$, for $k \ge 1$, is at least $\lambda_{\mu,q}$. 
   Thus, if $\supp(\Vu^{k+1})\neq \supp(\Vu^k)$ we have $\Ntwo{\Vu^{k+1}-\Vu^k}\geq \lambda_{\mu,q}$, and analogously, if $\Sgn(\Vu^{k+1})\neq \Sgn(\Vu^k)$ we have $\Ntwo{\Vu^{k+1}-\Vu^k}\geq 2\lambda_{\mu,q}$.
    Thus, since $\Ntwo{\Vu^{k+1}-\Vu^k}\rightarrow 0$ as $k\rightarrow\infty$, sign and support can change only finitely many times.
\end{proof}

\begin{proof}[Proof of Theorem \ref{thm:Rho}]
    By Lemma \ref{lem:SupportRecovery} there exists $k_0$ such that for all $k\geq k_0$ the support of $\Vu^k$ is finite, and support and sign of $\Vu^k$ is equal to that of $\Vu^\star$.
    Thus, by \cite[Proposition 2.3]{BL15}, $\Vu^\star$ is a fixed point of \eqref{eqn:algorithm}.
    Denote $I=\supp(\Vu^\star)$ with $|I| \le s$. 
    The definition of proximal operator in \eqref{eqn:proxOp} and the Karush-Kuhn-Tucker (KKT) conditions yield
    \[ \alpha\Sgn(\usf_i^\star) \SN{\usf^\star_i}^{q-1} = -(\Bb^\top(\Bb \Vu^\star - \Vy_\beta))_i, \quad i\in I, \]
    and
    \begin{align*} 
        \usf^{k+1}_i+&\alpha\mu\Sgn(\usf_i^{k+1})\big\lvert\usf_i^{k+1}\big\rvert^{q-1} =\usf_i^k-\mu(\Bb^\top(\Bb \Vu^{k} - \Vy_\beta))_i, \, \quad i\in I.
    \end{align*}
    Subtracting the two equations on the index set $I$, and denoting $\psi(\Vu) = \frac{1}{q} \N{\Vu}_q^q$, we have 
    \begin{align}\label{eqn:step} 
        \Vu_I^{k+1} - \Vu^\star_I &+\alpha\mu\LRP{\psi'(\Vu_I^{k+1}) - \psi'(\Vu_I^\star)} = \Vu_I^{k} - \Vu^\star_I  - \mu\LRP{\Bb^\top\Bb(\Vu^{k} - \Vu^\star)}_I,
    \end{align}
    where $\psi'(\Vu)=(\Sgn(\usf_i)\SN{\usf_i}^{q-1})_{i\in [n]}$ is acting entry-wise. 
    Note that since $k\geq k_0$ we have $\Sgn(\Vu^\star_I)=\Sgn(\Vu^{k+1}_I)$ and  $\N{\Vu^\star-\Vu^k}_2=\Ntwo{\Vu^\star_I-\Vu^k_I}$.
    A straightforward calculation gives
    \[ \Vu_I^{k} - \Vu^\star_I  - \mu\LRP{\Bb^\top\Bb(\Vu^{k} - \Vu^\star)}_I=(\Id_s-\mu\Mmat_{I,I})\LRP{\Vu_I^{k} - \Vu^\star_I}\]
    where $\Mmat=\Bb^\top\Bb$.
    Taking the inner product of \eqref{eqn:step} with $\Vu_I^{k+1} - \Vu^\star_I$, and applying the Cauchy-Schwartz inequality, we get
    \begin{align*}
        \Ntwo{\Vu_I^{k+1} - \Vu^\star_I}^2&-\alpha\mu\EUSP{\Vu_I^{k+1} - \Vu^\star_I}{\psi'(\Vu_I^{k+1}) - \psi'(\Vu^\star_I)} \leq \opN{\Id_s-\mu\Mmat_{I,I}} \Ntwo{\Vu_I^{k+1} - \Vu^\star_I}\Ntwo{\Vu_I^{k} - \Vu^\star_I}.
    \end{align*}
    Since $\psi$ is twice differentiable, and $\Vu^{k+1}$ and $\Vu^\star$ have the same sign and support, we have for the second term
    \begin{align*} 
        \Big\langle \Vu_I^{k+1} - \Vu^\star_I,&\psi'(\Vu_I^{k+1}) - \psi'(\Vu^\star_I)\Big\rangle =\sum_{i\in I} (\usf^{k+1}_{i}-\usf^\star_i)\LRP{\psi'(\usf^{k+1}_{i})-\psi'(\usf^\star_i)}= \sum_{i\in I} \psi''(C_{i}^{k+1})(\usf^{k+1}_{i}-\usf^\star_i)^2,
    \end{align*}
    where $C_i^{k+1}$ lies between $\usf^{k+1}_{i}$ and $\usf^\star_i$, and $\psi''(\usf) = (q-1) \usf^{q-2}$. Since $\Vu^{k} \rightarrow \Vu^\star$, we may assume $k_0$ sufficiently large to guarantee $\usf_i^k \ge \frac{1}{2} \usf_i^\star$, for all $k \ge k_0$ and $i \in I$. Consequently,
    \begin{align*} 
        \SN{\psi''(C_i^{k+1})} &= \SN{q-1} |C_i^{k+1}|^{q-2} \leq (1-q) \left( \frac{\umin}{2} \right)^{q-2}.
    \end{align*}
    Thus,
    \begin{align*}
        \Ntwo{\Vu_I^{k+1} - \Vu^\star_I}^2 &-\alpha\mu\EUSP{\Vu_I^{k+1} - \Vu^\star_I}{\psi'(\Vu_I^{k+1}) - \psi'(\Vu^\star_I)} \\&\geq \left( 1- \mu \alpha (1-q) \left( \frac{\umin}{2} \right)^{q-2} \right) \Ntwo{\Vu_I^{k+1} - \Vu^\star_I}^2.
    \end{align*}
    On the other hand, since $\mu < (\lambda_{\max}(\Mmat))^{-1} \le (\lambda_{\min}(\Mmat_{I,I}))^{-1}$, we have
    \begin{align*}
        \opN{\Id_s-\mu\Mmat_{I,I}} &= 1-\mu\lambda_{\min}(\Mmat_{I,I}) \leq 1-\mu\left( 1 + \frac{\opN{\Amat}^2}{\beta} \right)^{-1} \lmin(\Amat_I^\top \Amat_I),
    \end{align*}
    by Lemma \ref{lem:Bb_bounds}. Thus,
    \[ \Ntwo{\Vu^{k+1} - \Vu^\star}\leq\frac{1-\mu\left( 1 + \frac{\opN{\Amat}^2}{\beta} \right)^{-1} \lmin(\Amat_I^\top \Amat_I)}{ 1- \mu \alpha (1-q) \left( \frac{\umin}{2} \right)^{q-2} } \Ntwo{\Vu^{k} - \Vu^\star}.\]
    Together with the RIP of $\Amat$ this yields the claim. 
\end{proof}

\subsection{Proof of Lemma \ref{lem:Moreau_prox}}
\label{app:Moreau_prox}
Let $\Vx \in \R^n$ be fixed and assume $f(0) = 0$ without loss of generality. We have
\begin{align*}
    \prox_{\mu,\lambda M_{t,f}} (\Vx) 
    &= \argmin_{\Vz \in \R^n} \frac{1}{2} \Ntwo{\Vz - \Vx}^2 + \mu \lambda M_{t,f}(\Vz) \\
    &= \argmin_{\Vz \in \R^n} \inf_{\tilde{\Vz} \in \R^n} \frac{1}{2} \Ntwo{\Vz - \Vx}^2 + \mu \lambda f(\tilde{\Vz}) + \frac{\mu \lambda}{2t} \Ntwo{\Vz - \tilde{\Vz}}^2 \\
    &= \argmin_{\Vz \in \R^n} \inf_{\tilde{\Vz} \in \R^n} h(\Vz,\tilde{\Vz}).
\end{align*}
By $f$ being lower semi-continuous and bounded from below, we have
\begin{align*}
    \inf_{\Vz,\tilde{\Vz} \in \R^n} h(\Vz,\tilde{\Vz}) = \min_{\Vz,\tilde{\Vz} \in \R^n} h(\Vz,\tilde{\Vz}),
\end{align*}
implying $\prox_{\mu,\lambda M_{t,f}} (\Vx)\neq\emptyset$.
Denote by $\CE_{\tilde{\Vz}} = \curly{ \theta \Vx + (1-\theta) \tilde{\Vz} \colon \theta \in [0,1]}$ the line connecting $\Vx$ and $\tilde{\Vz}$. 
Since $\CE_{\tilde{\Vz}}$ is convex, we have $h(\VP_{\CE_{\tilde{\Vz}}}(\Vz),\tilde{\Vz}) \le h(\Vz,\tilde{\Vz})$, for any $\Vz,\tilde{\Vz} \in \R^n$, with equality if and only if $\VP_{\CE_{\tilde{\Vz}}}(\Vz) = \Vz$. 
Consequently, if $(\Vz,\tilde{\Vz})$ solves the above program, we have $\Vz = \theta \Vx + (1-\theta) \tilde{\Vz}$ for some $\theta \in [0,1]$. 
Let us define
\begin{align*}
    \tilde{h}(\theta,\tilde{\Vz}) = h( \theta \Vx + (1-\theta) \tilde{\Vz}, \tilde{\Vz}) = \round{ \frac{(1-\theta)^2}{2} + \frac{\mu \lambda \theta^2}{2t} } \Ntwo{\Vx - \tilde{\Vz} }^2 + \mu \lambda f(\tilde{\Vz}).
\end{align*}
By the above considerations we have
\begin{align*}
    \min_{\Vz,\tilde{\Vz} \in \R^n} h(\Vz,\tilde{\Vz}) = \min_{\tilde{\Vz} \in \R^n} \min_{\theta \in [0,1]} \tilde{h}(\theta,\tilde{\Vz}),
\end{align*}
where there is a one-to-one correspondence between solutions $(\Vz^\star,\tilde{\Vz}^\star)$ of the left side and solutions $(\theta^\star,\tilde{\Vz}^\star)$. 
Moreover, it follows easily that for $\tilde{\Vz} \in \R^n$ fixed,
\begin{align*}
    \theta^\star = \argmin_{\theta \in [0,1]} \tilde{h}(\theta,\tilde{\Vz}) = \frac{1}{1 + \frac{\mu \lambda}{t}},
\end{align*}
which is independent of $\tilde{\Vz}$. 
Thus, the claim follows since
\begin{align*}
    \argmin_{\tilde{\Vz} \in \R^n}\tilde{h}(\theta^\star,\tilde{\Vz}) 
    &= \argmin_{\tilde{\Vz} \in \R^n}\frac{\mu \lambda }{t}{\frac{1}{1 + \frac{\mu \lambda}{t}}} \frac{\Ntwo{ \Vx - \tilde{\Vz} }^2}{2} + \mu \lambda f(\tilde{\Vz}) =\prox_{(t+\mu\lambda), f}(\Vx).
\end{align*}

\subsection{Proof of Theorem \ref{thm:IC_Convergence}}
\label{app:IC_Convergence}

As in the proof of Theorem \ref{thm:Rho}, the first step is to control support and signs of the iterates. \BLUE{Recall that, for $\Vw^k$ as in \eqref{eqn:InfConv_algorithm}, we denote by $\Vu^k = \prox_{\frac{1}{\beta},\frac{\alpha}{q} \Nq{\cdot}} (\Vw^k)$ the sequence of minimizers attaining $g(\Vw^k)$, by $\Vv^k = \Vw^k - \Vu^k$, and that by \eqref{eqn:InfConv_AMrepresentation} we have
\begin{align} \label{eqn:InfConv_AMrepresentation2}
    \begin{cases}
    \Vw^k = \argmin_{\Vw \in \R^n} \frac{1}{2\mu} \Ntwo{\Vw - \Vw^{k-1} + \mu \Amat^\top (\Amat \Vw^{k-1} - \Vy)}^2 + \frac{\beta}{2} \Ntwo{\Vw - \Vu^k}^2 \\
    \Vu^k = \argmin_{\Vu \in \R^n} \frac{\beta}{2} \Ntwo{\Vu - \Vw^k}^2 +  \frac{\alpha}{q} \Nq{\Vu}
    \end{cases}.
\end{align}{}
}   
\begin{lemma}[Sign and support stability] \label{lem:InfConv_SignSupport}
    Assume $\mu < \opN{\Amat}^{-2}$. Then the successive iterates $\Ntwo{\Vw^{k+1} - \Vw^k}$, $\Ntwo{\Vu^{k+1} - \Vu^k}$, and $\Ntwo{\Vv^{k+1} - \Vv^k}$ converge to zero and all but finitely many iterates $\Vu^k$ share the same finite support and the same signs.
\end{lemma}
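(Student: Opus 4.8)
The plan is to mirror the argument used for the augmented formulation in Lemma~\ref{lem:SupportRecovery}, in three steps: first establish asymptotic regularity of the outer iterates $\Vw^k$, then transfer it to the companion sequences $\Vu^k$ and $\Vv^k$, and finally exploit the ``gap'' in the range of the scalar $\ell_q$-proximal map to conclude that the support and signs of $\Vu^k$ stabilize.

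\emph{Step 1 (regularity of $\Vw^k$).} I would first observe that the objective of \eqref{eqn:mp_infconv}, $F(\Vw) = \tfrac12\Ntwo{\Amat\Vw-\Vy}^2 + g(\Vw)$, is the sum of a differentiable term with $\opN{\Amat}^2$-Lipschitz gradient and the function $g$ of \eqref{eqn:InfConv}, which is proper, lower semi-continuous, nonnegative, and coercive (for every $\Vu$ at least one of $\tfrac{\alpha}{q}\Nq{\Vu}$ and $\tfrac{\beta}{2}\Ntwo{\Vw-\Vu}^2$ blows up as $\Ntwo{\Vw}\to\infty$), so that $\prox_{\mu,g}$ is well-defined and $\inf F>-\infty$. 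Since $\mu<\opN{\Amat}^{-2}$, the standard sufficient-decrease estimate for proximal gradient descent (compare \cite[Corollary~2.1]{BL15}) gives
\begin{align*}
  F(\Vw^{k+1}) \le F(\Vw^k) - \round{\tfrac{1}{2\mu} - \tfrac{\opN{\Amat}^2}{2}}\Ntwo{\Vw^{k+1}-\Vw^k}^2,
\end{align*}
so summing over $k$ yields $\sum_k\Ntwo{\Vw^{k+1}-\Vw^k}^2<\infty$ and hence $\Ntwo{\Vw^{k+1}-\Vw^k}\to0$.

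\emph{Step 2 (transfer to $\Vu^k$, $\Vv^k$ --- the main obstacle).} The delicate point is that $\Vu^k=\prox_{1/\beta,\frac{\alpha}{q}\Nq{\cdot}}(\Vw^k)$ and this map is \emph{discontinuous} when $q<1$, so the decay of $\Ntwo{\Vw^{k+1}-\Vw^k}$ does not transfer to $\Ntwo{\Vu^{k+1}-\Vu^k}$ by continuity. I would resolve this with the coupled representation \eqref{eqn:InfConv_AMrepresentation2}: its first line is a strictly convex quadratic in $\Vw$ with $\Vu^k$ held fixed, whose optimality condition reads $(1+\mu\beta)\Vw^k=\round{\Vw^{k-1}-\mu\Amat^\top(\Amat\Vw^{k-1}-\Vy)}+\mu\beta\,\Vu^k$. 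Solving for $\Vu^k$ exhibits it as an \emph{affine}, hence Lipschitz, function of $(\Vw^{k-1},\Vw^k)$,
\begin{align*}
  \Vu^k = \tfrac{1}{\mu\beta}\Big[(1+\mu\beta)\Vw^k - (\Id_n-\mu\Amat^\top\Amat)\Vw^{k-1} - \mu\Amat^\top\Vy\Big],
\end{align*}
so that $\Vu^{k+1}-\Vu^k$ is a fixed linear combination of $\Vw^{k+1}-\Vw^k$ and $(\Id_n-\mu\Amat^\top\Amat)(\Vw^k-\Vw^{k-1})$, both of which vanish by Step~1; thus $\Ntwo{\Vu^{k+1}-\Vu^k}\to0$, and $\Ntwo{\Vv^{k+1}-\Vv^k}\le\Ntwo{\Vw^{k+1}-\Vw^k}+\Ntwo{\Vu^{k+1}-\Vu^k}\to0$ since $\Vv^k=\Vw^k-\Vu^k$. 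In short, the discontinuity is absorbed into the linear coupled optimality system, and I expect this to be the only non-routine ingredient.

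\emph{Step 3 (finitely many support/sign changes).} Since $\Vu^k$ is applied component-wise, every coordinate of $\Vu^k$ lies in the range of $\prox_{1/\beta,\frac{\alpha}{q}|\cdot|^q}$, which by \eqref{eqn:prox_lp} equals $(-\infty,-\lambda]\cup\{0\}\cup[\lambda,\infty)$ with $\lambda=(2\alpha(1-q)/(\beta q))^{1/(2-q)}>0$ for $0<q<1$. Hence any change of $\supp(\Vu^k)$ forces $\Ntwo{\Vu^{k+1}-\Vu^k}\ge\lambda$ and any change of $\Sgn(\Vu^k)$ forces $\Ntwo{\Vu^{k+1}-\Vu^k}\ge2\lambda$; since $\Ntwo{\Vu^{k+1}-\Vu^k}\to0$ by Step~2, each can happen only finitely often, so all but finitely many $\Vu^k$ share a common support and a common sign pattern. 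For $q=1$ the proximal map is soft-thresholding and $g$ is convex and smooth; the conclusion then follows by the standard support-identification argument in the convex setting, which I would only sketch since $0<q<1$ is the case of interest.
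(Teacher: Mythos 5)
Your proposal is correct and follows essentially the same route as the paper: sufficient decrease of the proximal gradient iteration (via \cite[Corollary 2.1]{BL15}) gives $\Ntwo{\Vw^{k+1}-\Vw^k}\to 0$, the optimality condition of the first line of \eqref{eqn:InfConv_AMrepresentation2} transfers this to $\Vu^k$ and $\Vv^k$, and the gap in the range of the scalar $\ell_q$-prox forces support and sign stabilization exactly as in Lemma \ref{lem:SupportRecovery}. Your explicit affine formula for $\Vu^k$ in terms of $(\Vw^{k-1},\Vw^k)$ is just a rearrangement of the paper's step of subtracting two consecutive KKT equations for $\Vv^k$, so the arguments coincide.
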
{}
\begin{proof}
   First, note that $g$ is a proper and coercive function. Second, as $g(\Vw) = \inf_{\Vu \in \R^n} f(\Vu,\Vw)$, for $f$ continuous, we obtain continuity of $g$ at any point $\Vw \in \R^n$ since by coercivity of $f$ the infimum can be restricted to a finite ball and the infimum of continuous functions on a compact set is continuous.
   Consequently, by \cite[Corollary 2.1]{BL15} and the assumption on $\mu$ we have $\Ntwo{\Vw^{k+1} - \Vw^k} \rightarrow 0$, for $\Vw^{k+1} = \prox_{\mu,g} ( \Vw^k - \mu \Amat^\top (\Amat \Vw^k - \Vy))$. 
   By the KKT-conditions of \eqref{eqn:InfConv_AMrepresentation2}, we obtain
   \begin{align*}
       0 &= (\Vw^{k+1} - \Vw^k) + \mu \Amat^\top (\Amat \Vw^k - \Vy) + \beta \mu \Vv^{k+1}, \\
       0 &= (\Vw^k - \Vw^{k-1}) + \mu \Amat^\top (\Amat \Vw^{k-1} - \Vy) + \beta \mu \Vv^k.
   \end{align*}
   Subtracting the two equations gives $\Ntwo{\Vv^{k+1} - \Vv^k} \rightarrow 0$, and $\Vu^k = \Vw^k - \Vv^k$ yields $\Ntwo{\Vu^{k+1} - \Vu^k} \rightarrow 0$. 
   The second claim follows as in Lemma \ref{lem:SupportRecovery}, since $\Vu^k$ is a thresholded version of $\Vw^k$.
\end{proof}{}
\begin{proof}[Proof of Theorem \ref{thm:IC_Convergence}]
    First note that $\Vw^k \rightarrow \Vw^\star$ implies via Lemma \ref{lem:InfConv_SignSupport} that $\Vu^k \rightarrow \Vu^\star$ and $\Vv^k \rightarrow \Vv^\star$. Furthermore, $\Vw^\star$ is a fixed point of \eqref{eqn:InfConv_algorithm}, by \cite[Proposition 2.3]{BL15}.
    By Lemma \ref{lem:InfConv_SignSupport} there exists $k_0$ such that for all $k\geq k_0$ the support of $\Vu^k$ is finite, and support and sign of $\Vu^k$ is equal to that of $\Vu^\star$.
    Denote $I=\supp(\Vu^\star)$. 
    By the KKT-conditions of \eqref{eqn:InfConv_AMrepresentation2}, we get
    \begin{align*}
        i \in I \colon \begin{cases}{}
        \alpha \mu \sign(\usf_i^\star) |\usf_i^\star|^{q-1}= -\mu (\Amat^\top ( \Amat \Vw^\star - \Vy ))_i, \\
        \wsf_i^{k+1} + \alpha \mu \sign(\usf_i^{k+1}) |\usf_i^{k+1}|^{q-1} = \wsf_i^k -\mu (\Amat^\top ( \Amat \Vw^k - \Vy ))_i,
        \end{cases}
    \end{align*}{}
    and
    \begin{align*}
        i \notin I \colon \begin{cases}{}
        0 = \beta \mu \wsf_i^\star + \mu (\Amat^\top ( \Amat \Vw^\star - \Vy ))_i, \\
        0 = (1 + \beta\mu) \wsf_i^{k+1} - \wsf_i^k + \mu (\Amat^\top ( \Amat \Vw^k - \Vy ))_i,
        \end{cases}.
    \end{align*}{}
    For $\psi(\Vu) = \frac{1}{q} \N{\Vu}_q^q$ with $\psi'(\Vu)=(\Sgn(\usf_i)\SN{\usf_i}^{q-1})_{i\in [n]}$ acting entry-wise, this implies
    \begin{align} \label{eqn:InfConv_KKT_consequence1}
        (\Vw^{k+1} - \Vw^\star)_I + \alpha\mu ( \psi'(\Vu^{k+1}) - \psi'(\Vu^\star) ) = (\Vw^k - \Vw^\star)_I - \mu \Amat_I^\top \Amat (\Vw^k - \Vw^\star)
    \end{align}{}
    and
    \begin{align} \label{eqn:InfConv_KKT_consequence2}
        (1+\mu\beta) (\Vw^{k+1} - \Vw^\star)_{I^c} = (\Vw^k - \Vw^\star)_{I^c} - \mu \Amat_{I^c}^\top \Amat (\Vw^k - \Vw^\star).
    \end{align}{}
    Repeating the steps as in Theorem \ref{thm:Rho}, from \eqref{eqn:InfConv_KKT_consequence1} we get
    \begin{align*}
        \round{ 1 - \alpha\mu (1-q) \round{\frac{\umin}{2}}^{q-2} }
        \Ntwo{(\Vw^{k+1} - \Vw^\star)_I}^2
        \le
        \opN{ \VP_I - \mu \Amat_I^\top \Amat } \Ntwo{\Vw^k - \Vw^\star} \Ntwo{(\Vw^{k+1} - \Vw^\star)_I}
    \end{align*}{}
    and from \eqref{eqn:InfConv_KKT_consequence2} we obtain
    \begin{align*}
        (1 + \mu\beta) \Ntwo{(\Vw^{k+1} - \Vw^\star)_{I^c}}
        \le
        \opN{ \VP_{I^c} - \mu \Amat_{I^c}^\top \Amat } \Ntwo{\Vw^k - \Vw^\star}.
    \end{align*}
Squaring and summing the last two equations, the claim follows by orthogonality of $(\Vw^{k+1} - \Vw^\star)_I$ and $(\Vw^{k+1} - \Vw^\star)_{I^c}$.
\end{proof}{}

\section{Coherence Bound}
The following Lemma bounds the coherence of $\Bb$ in terms of the coherence of $\Amat$. The bound becomes tight for large choices of $\beta$.
\begin{lemma} \label{lem:BbCoherence}
We have 
\begin{align*}
    \mathrm{coh}(\VB_\beta)\leq \left( 1+\frac{\N{\VA}^2}{\beta} \right) \mathrm{coh}(\VA) +\frac{\N{\VA}^2}{\beta}.
\end{align*}
\end{lemma}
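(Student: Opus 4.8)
The plan is to exploit the factorization $\VB_\beta = \VM\VA$, where $\VM := \LRP{\Id_m + \beta^{-1}\VA\VA^\top}^{-1/2}$ is symmetric positive definite, so that the $i$-th column of $\VB_\beta$ is $\Vb_i = \VM\Va_i$ with $\Va_i$ the $i$-th column of $\VA$. (We assume throughout, as is implicit in the definition of coherence, that $\VA$ — and hence, since $\VM$ is invertible, also $\VB_\beta$ — has no zero column.) The task is then to bound $\frac{\SN{\Vb_i^\top\Vb_j}}{\Ntwo{\Vb_i}\Ntwo{\Vb_j}} = \frac{\SN{\Va_i^\top\VM^2\Va_j}}{\sqrt{\Va_i^\top\VM^2\Va_i}\,\sqrt{\Va_j^\top\VM^2\Va_j}}$ uniformly over $i\neq j$.

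First I would record the relevant spectral estimates. Writing $\VA\VA^\top = \VU\VLambda\VU^\top$ for its spectral decomposition, the eigenvalues of $\VM^2 = \LRP{\Id_m + \beta^{-1}\VA\VA^\top}^{-1}$ are $\LRP{1+\lambda_i/\beta}^{-1}$, which lie in $\big[(1+t)^{-1},\,1\big]$ with $t := \N{\VA}^2/\beta$, since $0 \le \lambda_i \le \N{\VA}^2$. Hence $(1+t)^{-1}\Id_m \preceq \VM^2 \preceq \Id_m$, and the matrix $\VN := \Id_m - \VM^2$ is positive semidefinite with $\opN{\VN} = 1 - (1+t)^{-1} = t/(1+t)$.

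Next I would assemble the estimate. From $\VM^2 = \Id_m - \VN$ we get $\Vb_i^\top\Vb_j = \Va_i^\top\Va_j - \Va_i^\top\VN\Va_j$, and Cauchy--Schwarz together with $\opN{\VN}\le t/(1+t)$ gives $\SN{\Vb_i^\top\Vb_j} \le \SN{\Va_i^\top\Va_j} + \frac{t}{1+t}\Ntwo{\Va_i}\Ntwo{\Va_j}$. On the other hand $\VM^2 \succeq (1+t)^{-1}\Id_m$ yields $\Ntwo{\Vb_i}^2 = \Va_i^\top\VM^2\Va_i \ge (1+t)^{-1}\Ntwo{\Va_i}^2$, and likewise for $j$. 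Dividing and using $\frac{\SN{\Va_i^\top\Va_j}}{\Ntwo{\Va_i}\Ntwo{\Va_j}} \le \mathrm{coh}(\VA)$ gives $\frac{\SN{\Vb_i^\top\Vb_j}}{\Ntwo{\Vb_i}\Ntwo{\Vb_j}} \le (1+t)\LRP{\mathrm{coh}(\VA) + \tfrac{t}{1+t}} = (1+t)\mathrm{coh}(\VA) + t$; taking the maximum over $i\neq j$ and substituting $t = \N{\VA}^2/\beta$ yields the claim.

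The argument is essentially routine; the one point that must be handled carefully is the estimate $\opN{\VN} \le t/(1+t)$. Using the cruder bound $\opN{\VN}\le t$ only reproduces the weaker inequality $\mathrm{coh}(\VB_\beta) \le \LRP{1+\N{\VA}^2/\beta}\LRP{\mathrm{coh}(\VA) + \N{\VA}^2/\beta}$ recorded in \eqref{eq:CohBound}, so one must retain the factor $(1+t)^{-1} = \lambda_{\min}(\VM^2)$ from the denominator to cancel the $(1+t)$ multiplying $\opN{\VN}$ and land exactly on the asserted bound.
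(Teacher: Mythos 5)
Your proof is correct and follows essentially the same route as the paper's: both bound the numerator via $\opN{\Id_m - \VQ_\beta^{-1}} = c_\beta/(1+c_\beta)$ with $c_\beta = \N{\VA}^2/\beta$ and the denominator via $\Ntwo{\Vb_i} \ge (1+c_\beta)^{-1/2}\Ntwo{\Va_i}$. The only cosmetic difference is that you obtain the column-norm lower bound directly from $\lambda_{\min}(\VM^2) = (1+c_\beta)^{-1}$ applied to the quadratic form, whereas the paper derives it by a triangle inequality using $\opN{\VQ_\beta^{-1/2}-\Id_m} = 1-(1+c_\beta)^{-1/2}$; both yield the identical estimate.
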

\begin{proof}
Recall that the coherence of a matrix is defined as
\[ 
    \mathrm{coh}(\VM) = \max_{i\neq j}\frac{\SN{\Vm_i^\top\Vm_j}}{\Ntwo{\Vm_i}\Ntwo{\Vm_j}},
\]
where $\Vm_i$ is the $i$-th column of $\VM$.
Define $\VQ_\beta = \Id_m + \frac{\VA\VA^\top}{\beta}$, so that $\VB_\beta = \VQ_\beta^{-1/2}\VA$, and let $\VA=\VU\VSigma \VV^\top$ be the SVD of $\VA$.
This gives 
\[
    \VQ_\beta^{-1}-\Id_m = \Big(\Id_m + \frac{\VA\VA^\top}{\beta}\Big)^{-1}-\Id_m= \VU\bigg(\Big(\Id_m+\frac{\VSigma \VSigma^\top}{\beta}\Big)^{-1}-\Id_m \bigg)\VU^\top.
\]
Therefore,
\[ 
    \N{\VQ_\beta^{-1}-\Id_m} = \N{\Big(\Id_m+\frac{\VSigma \VSigma^\top}{\beta}\Big)^{-1}-\Id_m }
    = \frac{c_\beta}{1+c_\beta},
\]
for $c_\beta = \frac{\N{\VA}^2}{\beta}$, and by triangle inequality and Cauchy-Schwarz
\begin{align*}
    | \Vb_i^\top \Vb_j | 
    = | \Va_i^\top \VQ_\beta^{-1} \Va_j | 
    \le | \Va_i^\top \Va_j | + \frac{c_\beta}{1+c_\beta}  \Ntwo{\Va_i} \Ntwo{\Va_j},
\end{align*}
for all columns $\Vb_i,\Vb_j$ of $\Bb$. By the same argument we compute
\[
    \VQ_\beta^{-1/2}-\Id_m =\VU\bigg(\Big(\Id_m+\frac{\VSigma \VSigma^\top}{\beta}\Big)^{-1/2}-\Id_m \bigg)\VU^\top,
\]
giving
\[ 
    \opN{\VQ_\beta^{-1/2}-\Id_m} =
    1 - \sqrt{\frac{\beta}{\N{\VA}^2+\beta}}=1-({c_\beta+1})^{-1/2}.
\]
This yields
\begin{align} \label{eq:LowerBound}
    \Ntwo{\Vb_i} \geq \Ntwo{\Va_i}- \Ntwo{(\VQ_\beta^{-1/2}-\Id_m)\Va_i}\geq({c_\beta+1})^{-1/2}\Ntwo{\Va_i}
\end{align}
which implies
\begin{align*}
    \mathrm{coh}(\VB_\beta)
    = \max_{i\neq j} \frac{\SN{\Vb_i^\top\Vb_j}}{\Ntwo{\Vb_i}\Ntwo{\Vb_j}} 
    \leq (1+c_\beta) \left( \mathrm{coh}(\VA) + \frac{c_\beta}{1+c_\beta} \right).
\end{align*}
\end{proof}

For small $\beta$, the bound in Lemma \ref{lem:BbCoherence} is lossy. However, we can show that the coherence of $\Bb$ converges to the coherence of a conditioned version of $\Amat$, for $\beta \rightarrow 0$.

\begin{lemma} \label{lem:BbCoherence2}
  Let $\Amat \in \R^{m\times n}$, for $m \le n$, have full rank. We have $\mathrm{coh}(\Bb) \rightarrow \mathrm{coh}((\Amat\Amat^\top)^{-\frac{1}{2}} \Amat)$, for $\beta \rightarrow 0$. 
\end{lemma}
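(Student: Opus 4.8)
The plan is to combine the scale-invariance of the coherence with the continuity of the matrix inverse square root. First I would rewrite
\[
    \Bb = \LRP{\Id_m + \frac{\Amat\Amat^\top}{\beta}}^{-1/2}\Amat = \beta^{1/2}\,\LRP{\beta\Id_m + \Amat\Amat^\top}^{-1/2}\Amat .
\]
Since the coherence of a matrix depends only on its normalized columns, it is invariant under multiplication by a positive scalar, so $\mathrm{coh}(\Bb) = \mathrm{coh}\big((\beta\Id_m + \Amat\Amat^\top)^{-1/2}\Amat\big)$ for every $\beta > 0$. This removes the $\beta^{-1}$ blow-up that made the bound in Lemma~\ref{lem:BbCoherence} lossy for small $\beta$.

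Next, because $\Amat$ has full rank and $m \le n$, the matrix $\Amat\Amat^\top \in \R^{m\times m}$ is symmetric positive definite. The map $X \mapsto X^{-1/2}$ is continuous on the open cone of positive definite matrices (it is a composition of matrix inversion and the matrix square root, both continuous there), and $\beta\Id_m + \Amat\Amat^\top \rightarrow \Amat\Amat^\top$ as $\beta \rightarrow 0$. Hence $(\beta\Id_m + \Amat\Amat^\top)^{-1/2} \rightarrow (\Amat\Amat^\top)^{-1/2}$, and therefore the columns $\Vb_i(\beta)$ of $(\beta\Id_m + \Amat\Amat^\top)^{-1/2}\Amat$ converge to $\Vc_i := (\Amat\Amat^\top)^{-1/2}\Va_i$ as $\beta \rightarrow 0$.

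Finally I would pass to the limit inside the coherence. Each $\Vc_i$ is nonzero, since $(\Amat\Amat^\top)^{-1/2}$ is invertible and $\Va_i \ne 0$ (the latter being implicit in $\mathrm{coh}(\Amat)$ being well defined). Thus for every fixed pair $i\neq j$ the ratio $\SN{\Vb_i(\beta)^\top\Vb_j(\beta)}/(\Ntwo{\Vb_i(\beta)}\,\Ntwo{\Vb_j(\beta)})$ converges to $\SN{\Vc_i^\top\Vc_j}/(\Ntwo{\Vc_i}\,\Ntwo{\Vc_j})$; since the maximum defining the coherence is taken over finitely many pairs, $\mathrm{coh}(\Bb) \rightarrow \mathrm{coh}((\Amat\Amat^\top)^{-1/2}\Amat)$. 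The only mildly delicate points are the continuity of $X\mapsto X^{-1/2}$ at $\Amat\Amat^\top$ and the non-vanishing of the limiting columns; neither is a serious obstacle, and the remainder is routine bookkeeping.
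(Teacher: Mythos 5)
Your proof is correct, and it reaches the conclusion by a softer route than the paper. Both arguments hinge on the same algebraic observation, namely that $\LRP{\Id_m + \frac{\Amat\Amat^\top}{\beta}}^{-1/2} = \sqrt{\beta}\,\LRP{\beta\Id_m + \Amat\Amat^\top}^{-1/2}$, so that $\Bb$ is a positive multiple of $(\beta\Id_m+\Amat\Amat^\top)^{-1/2}\Amat$; the paper encodes this by comparing $\Bb$ directly with $\VC = \sqrt{\beta}\,(\Amat\Amat^\top)^{-1/2}\Amat$. Where you diverge is in how the limit is taken: you invoke scale invariance of the coherence to discard the $\sqrt{\beta}$ factor, then appeal to continuity of $X \mapsto X^{-1/2}$ on the positive definite cone and pass to the limit in each of the finitely many column-pair ratios. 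The paper instead carries out an explicit perturbation analysis, bounding $\bigl\| \VQ_\beta^{-1} - (\Amat\Amat^\top/\beta)^{-1} \bigr\|$ and $\bigl\| \VQ_\beta^{-1/2} - (\Amat\Amat^\top/\beta)^{-1/2} \bigr\|$ by powers of $\beta/\sigma_{\min}^2$ and propagating these through the inner products and norms. Your argument is shorter and suffices for the stated qualitative claim; the paper's quantitative version additionally yields a convergence rate of $\mathcal{O}(\sqrt{\beta})$ for the difference of the coherence ratios, which a pure continuity argument does not provide. You are also right to flag that the columns of $\Amat$ must be nonzero for either coherence to be defined --- full rank alone does not guarantee this when $m < n$ --- and the paper's proof tacitly relies on the same assumption in its lower bound on $\Ntwo{\Vb_i}$.
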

\begin{proof}
  Define $\VQ_\beta = \Id_m + \frac{\VA\VA^\top}{\beta}$, so that $\VB_\beta = \VQ_\beta^{-1/2}\VA$, and let $\VA=\VU\VSigma \VV^\top$ be the SVD of $\VA$. Define $\VC = \sqrt{\beta} (\Amat\Amat^\top)^{-\frac{1}{2}} \Amat$ with columns $\Vc_i$.
  First note, that
  \begin{align*}
      \left\| \VQ_\beta^{-1} - \round{\frac{\Amat\Amat^\top}{\beta}}^{-1} \right\|
      = \max_{i \in [m]} \abs{ \frac{1}{1+\frac{\sigma_i^2}{\beta}} - \frac{1}{\frac{\sigma_i^2}{\beta}} }
      = \abs{ \frac{\beta}{\sigma_{\text{min}}^2 \round{1+\frac{\sigma_{\text{min}}^2}{\beta}}} } 
      \le \frac{\beta^2}{\sigma_{\text{min}}^4}
  \end{align*}
  and
  \begin{align*}
      \left\| \VQ_\beta^{-\frac{1}{2}} - \round{\frac{\Amat\Amat^\top}{\beta}}^{-\frac{1}{2}} \right\|
      = \max_{i \in [m]} \abs{ \frac{1}{\sqrt{1+\frac{\sigma_i^2}{\beta}}} - \frac{1}{\sqrt{\frac{\sigma_i^2}{\beta}}} }
      = \sqrt{\beta} \abs{ \frac{\sqrt{\sigma_i^2} - \sqrt{\beta+\sigma_i^2} }{\sigma_{\text{min}} \sqrt{\beta+ \sigma_{\text{min}}^2}} } 
      \le \frac{\beta}{\sigma_{\text{min}}^2}.
  \end{align*}
  Consequently,
  \begin{align*}
      \abs{\inner{\Vb_i,\Vb_j} - \inner{\Vc_i,\Vc_j}} = \abs{\Ve_i^\top \Amat^\top \round{\VQ_\beta^{-1} - \round{\frac{\Amat\Amat^\top}{\beta}}^{-1}} \Amat \Ve_j } \le \beta^2 \frac{\opN{\Amat}^2}{\sigma_\text{min}^4}
  \end{align*}
  and
  \begin{align*}
      \abs{\Ntwo{\Vb_i} - \Ntwo{\Vc_i}} \le \Ntwo{\Vb_i - \Vc_i} = \left\| \round{\VQ_\beta^{-\frac{1}{2}} - \round{\frac{\Amat\Amat^\top}{\beta}}^{-\frac{1}{2}}} \Amat \Ve_i \right\| \le \beta \frac{\opN{\Amat}}{\sigma_\text{min}^2}.
  \end{align*}
  Since we have in addition that $\Ntwo{\Vc_i} \le \sqrt{\beta} \opN{(\Amat\Amat^\top)^{-\frac{1}{2}}\Amat} = \sqrt{\beta}$, $\Ntwo{\Vb_i} \le \opN{\VQ_\beta^{-\frac{1}{2}} \Amat} \le \sqrt{\beta}$, and $\Ntwo{\Vb_i} \ge (\opN{\Amat}^2 + \beta)^{-\frac{1}{2}} \Ntwo{\Va_i} \sqrt{\beta}$, we get
  \begin{align*}
      \abs{\frac{\inner{\Vb_i,\Vb_j}}{\Ntwo{\Vb_i}\Ntwo{\Vb_j}} - \frac{\inner{\Vc_i,\Vc_j}}{\Ntwo{\Vc_i}\Ntwo{\Vc_j}}}
      &= \abs{\frac{ \round{\inner{\Vb_i,\Vb_j} - \inner{\Vc_i,\Vc_j}} \Ntwo{\Vc_i}\Ntwo{\Vc_j} + \inner{\Vc_i,\Vc_j} \round{ \Ntwo{\Vc_i}\Ntwo{\Vc_j} - \Ntwo{\Vb_i}\Ntwo{\Vb_j} } }{\Ntwo{\Vb_i}\Ntwo{\Vb_j}\Ntwo{\Vc_i}\Ntwo{\Vc_j}}} \\
      &\le \frac{ \abs{ \inner{\Vb_i,\Vb_j} - \inner{\Vc_i,\Vc_j} } }{ \Ntwo{\Vb_i}\Ntwo{\Vb_j} } + \frac{\Ntwo{\Vc_i} \abs{ \Ntwo{\Vc_j} - \Ntwo{\Vb_j} } + \abs{ \Ntwo{\Vc_i} - \Ntwo{\Vb_i} } \Ntwo{\Vb_j} }{\Ntwo{\Vb_i}\Ntwo{\Vb_j}}\\
      &= \mathcal{O}(\beta) + \mathcal{O}(\sqrt{\beta}).
  \end{align*}
  We conclude by noting that $\mathrm{coh}(\VC) = \mathrm{coh}((\Amat\Amat^\top)^{-\frac{1}{2}} \Amat)$.
\end{proof}

\bibliographystyle{abbrv}
\bibliography{biblio.bib}

\end{document}